\documentclass[letterpaper, 10pt, conference]{ieeeconf}
\usepackage{graphicx} 

\IEEEoverridecommandlockouts

\def\usingarxiv{false}

\usepackage{multibib}
\newcites{app}{Appendix References}

\usepackage{amsmath, cases}
\usepackage{subfiles,balance}
\usepackage{amssymb}
\usepackage{amsthm}
\usepackage{algorithm}
\usepackage{algpseudocode}
\usepackage{etoolbox} 
\usepackage[hidelinks]{hyperref} 

\newcommand{\arxiv}[2]{\ifbool{\usingarxiv}{#1}{#2}} 

{
    \theoremstyle{plain}
    \newtheorem{assumption}{Assumption}
    
    \newtheorem{theorem}{Theorem}
    \newtheorem{lemma}{Lemma}
    
    \newtheorem{corollary}{Corollary}
    \newtheorem{definition}{Definition}
}

\makeatletter
\newcommand\fs@betterruled{%
  \def\@fs@cfont{\bfseries}\let\@fs@capt\floatc@ruled
  \def\@fs@pre{\vspace*{7pt}\hrule height.8pt depth0pt \kern2pt}%
  \def\@fs@post{\kern2pt\hrule\relax}%
  \def\@fs@mid{\kern2pt\hrule\kern2pt}%
  \let\@fs@iftopcapt\iftrue}
\floatstyle{betterruled}
\restylefloat{algorithm}
\makeatother

\title{\LARGE \bf Stability of Control Lyapunov Function Guided Reinforcement Learning}
\author{
Zachary Olkin*, William D. Compton*, Aaron D. Ames
\thanks{* denotes equal contribution. The authors are with the Department of Control and Dynamical Systems at the California Institute of Technology. This research is supported by Technology Innovation Institute (TII).}
}

\begin{document}
 
\maketitle

\thispagestyle{empty}
\pagestyle{empty}

\renewcommand\qedsymbol{$\blacksquare$}

\begin{abstract}
    Reinforcement learning (RL) has become the de facto method for achieving locomotion on humanoid robots in practice, yet stability analysis of the corresponding control policies is lacking. Recent work has attempted to merge control theoretic ideas with reinforcement learning through control guided learning. A notable example of this is the use of a control Lyapunov function (CLF) to synthesize the reinforcement learning rewards, a technique known as CLF-RL, which has shown practical success. This paper investigates the stability properties of optimal controllers using CLF-RL with the goal of bridging experimentally observed stability with theoretical guarantees. The RL problem is viewed as an optimal control problem and exponential stability is proven in both continuous and discrete time using both core CLF reward terms and the additional terms used in practice. The theoretical bounds are numerically verified on systems such as the double integrator and cart-pole. Finally, the CLF guided rewards are implemented for a walking humanoid robot to generate stable periodic orbits.
\end{abstract}

\section{Introduction}
Control of humanoid robots has progressed exponentially in the last few years, mostly due to advances in applied Reinforcement Learning (RL). Although bipedal and humanoid robots have been studied through the lens of control theory for decades including works on reduced order models \cite{kajita_3d_2001, pratt_capture_2006}, optimization based control \cite{wensing_optimization_2024,ames_rapidly_2014}, and the hybrid zero dynamics (HZD) methodology \cite{westervelt_hybrid_2003, westervelt_feedback_2018}, RL is emerging as the de-facto control methodology for these systems \cite{gu_evolution_2025}. The robustness and ease of deployment that accompanies RL has shown to be unmatched in practical terms.

Yet the practical success of RL is only part of the story. In general, these methods are applied to robot systems without strong theoretical groundings. In this paper, we start to bridge the gap between theory and practice for RL controllers that have found success when applied experimentally to humanoid robots. Specifically, we examine the theoretical properties of CLF-RL, a control-theoretic reward shaping method leveraging control Lyapunov functions (CLFs). This method has already successfully achieved locomotion on humanoids operating in real, outdoor environments \cite{li_clf-rl_2026, olkin_chasing_2025}.

CLFs are a classic tool in nonlinear control theory for synthesizing certifiably stable controllers for nonlinear systems \cite{sontag1989universal}. These controllers have been used in the context of bipedal locomotion successfully in the past \cite{ames_rapidly_2014, galloway_torque_2015}. CLFs provide a set of control actions, rather than a single fixed action, to be taken at any moment in time, and therefore an optimal action can be chosen out of this set. Given this, CLF-RL embeds both the Lyapunov function and the associated stability (decreasing) condition into the reward, thus reducing the number of rewards and incentivizing the RL to take certifiably stable actions. It should be noted that the RL is not restricted to taking the action of the CLF, and therefore can take other actions if they prove to be more optimal or stable in the long run. We find that in practice, this allows the closed loop controller to gain additional stability properties beyond what a CLF-QP controller \cite{ames_rapidly_2014} might yield.

\begin{figure}
    \centering
    \includegraphics[width=1.0\linewidth]{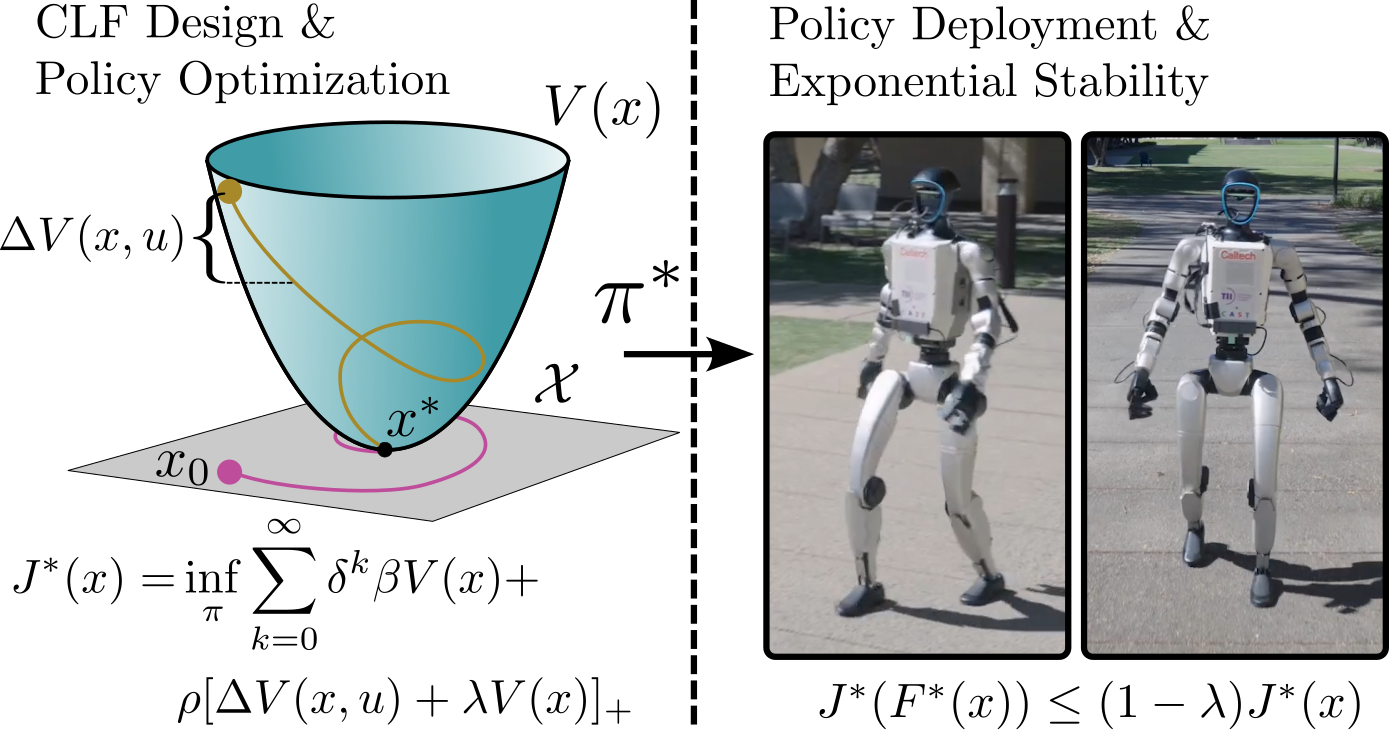}
    \caption{The main ideas behind CLF guided RL. A CLF is designed offline and used in an RL/optimal control problem as the reward/cost. Then the optimal policy, $\pi^*$ is applied to the system resulting in provable exponential stability and in stable humanoid locomotion.}
    \label{fig:clf_rl_theory_hero_fig}
    \vspace{-5mm}
\end{figure}

In applied RL for dynamic humanoid motions, the standard has become tracking reference trajectories, which commonly comes from retargeted human data \cite{sleiman_zest_2026, liao_beyondmimic_2025}. The classic tracking reward penalizes the normed distance to the trajectory, generally with a different reward for each type of output, such as joint positions, or frame velocities. On the other hand, in CLF-RL a dynamically feasible trajectory is used and the reward is shaped with a CLF. These CLF rewards have shown improved tracking capability over the standard tracking rewards \cite{olkin2026chasingautonomydynamicretargeting}.  While other works on RL have used inspiration from Lyapunov stability theory \cite{dong_principled_2020}, they do not actually use a true CLF for the system, instead applying the stability ideas from a CLF to arbitrary rewards. On the other hand \cite{westenbroek_lyapunov_2022} uses a CLF in the cost, but only uses this to fine-tune and augment other controllers and only proves asymptotic stability, not exponential. 

The goal of this paper is to provide theoretic grounding for CLF-RL.  
To this end, we view the RL problem as an optimal-control problem, and prove that CLF cost shaping leads to certifiably exponentially stable controllers. We start by considering the continuous time optimal control problem with CLF-related rewards, then progress to the discrete time problem, and lastly progress to include additional practical rewards. In each case we demonstrate that exponential stability to the origin is achieved when an optimal policy is found, thus taking the first steps to ground CLF-RL in principled theory. Further, we prove that the resulting controller is robust in the discrete time case. We demonstrate the stability and theoretical bounds through numerical examples on the double integrator and cart pole system. Lastly, we apply the method for tracking a periodic orbit on a humanoid robot (i.e. a walking gait) through RL and demonstrate that we can achieve accurate reference tracking and stable walking.

\section{Preliminaries}
We consider problem formulations and CLFs using both continuous and discrete time nonlinear dynamics.
\subsection{Continuous Time}
In continuous time, we consider dynamics of the form 
\begin{equation}
\label{eq:cont_dynamics}
    \dot{x} = f(x) + g(x)u
\end{equation}
where $x \in \mathbb{R}^n$, and $u \in \mathcal{U} \subset \mathbb{R}^m$ with $f(x^*) = 0$ and $0 \in \text{int}\:\mathcal{U}$. Here $x^*$ denotes the target equilibrium which we take as $x^* = 0$ without loss of generality.

Control Lyapunov functions (CLFs) will be used throughout. Let $V : \mathcal{X} \rightarrow \mathbb{R}_{\geq 0}$ denote a given CLF on a region $\mathcal{X} \subset \mathbb{R}^n$. To be a CLF, $V$ must be positive definite, quadratically bounded on $\mathcal{X}$, and also satisfy, for $\alpha > 0$:
\begin{equation*}
    \inf_{u \in \mathcal{U}} \nabla V(x)^T(f(x) + g(x)u) \leq -\alpha V(x).
\end{equation*}

\subsection{Discrete Time}
Here we consider discrete time dynamics of the form
\begin{equation*}
    x_{k+1} = F(x_k, u_k)
\end{equation*}
with $x_k \in \mathbb{R}^n$ and $u_k \in \mathcal{U} \subset \mathbb{R}^m$. $F: \mathcal{X} \times \mathcal{U} \rightarrow \mathcal{X}$ is continuous, $F(x^*,0) = 0$, and $\mathcal{U}$ is compact.

To be a discrete time CLF, $V(x)$ must similarly be positive definite, quadratically bounded, and satisfy
\begin{equation*}
   \inf_{u \in \mathcal{U}} V(F(x, u)) - V(x) \leq -\alpha V(x)
\end{equation*}
for $\alpha \in (0, 1]$.
\section{Continuous Time Stability}

\begin{assumption}
\label{assump:clf_existance}
    There exist a CLF $V(x)$ with constants $c_1$, $c_2$, $\alpha > 0$ and a corresponding admissible feedback controller $\mu : \mathcal{X} \rightarrow \mathcal{U}$ such that for all $x \in \mathcal{X}$,
    \begin{align}
    \label{eq:clf_cond}
        c_1 \| x \|^2 &\leq V(x) \leq c_2 \|x\|^2, \\
        \nabla V(x)^T(f(x) &+ g(x)\mu(x)) \leq -\alpha V(x). \nonumber
    \end{align}
\end{assumption}

This assumes the existence of a feedback controller stabilizing the system at the given convergence rate. Now we wish to optimize a separate controller (ultimately in order to use the powerful techniques of RL) where the cost encourages behavior that stabilizes the system at least as fast as the CLF. Since there is an entire set of feasible inputs, there are in general many choices of control action that satisfy this condition. Looking towards RL, we put this in the cost, rather than a constraint, and thus we do not enforce Lyapunov style convergence explicitly. As we will show, the resulting optimal controller is still exponentially stabilizing.

To design the new optimal controller fix design parameters $0 < \lambda \leq \alpha$, $\beta > 0$, $\rho > 0$, and $\gamma > 0$ where $\lambda$ is the desired decay rate in the CLF cost, $\beta$ is the weight on the Lyapunov cost, $\rho$ the weight on the CLF-violation penalty, and $\gamma$ the discount factor. We consider the following stage cost
\begin{equation}
\label{eq:stage_cost}
    \ell(x,u) := \beta V(x) + \rho [\dot{V}(x,u) + \lambda V(x)]_+,
\end{equation}
with the corresponding discounted infinite horizon optimal-control problem:
\begin{equation*}
    J_\pi(x_0) = \int_0^\infty e^{-\gamma t} \ell(x^\pi(t), \pi(x^\pi(t))) dt
\end{equation*}
where $x^\pi$ represents the flow of \eqref{eq:cont_dynamics} under the action of $u = \pi(x)$. The corresponding value function is then
\begin{equation*}
    J^*(x) = \inf_\pi J_\pi(x).
\end{equation*}

\begin{figure*}
    \centering
    \includegraphics[width=1.0\linewidth]{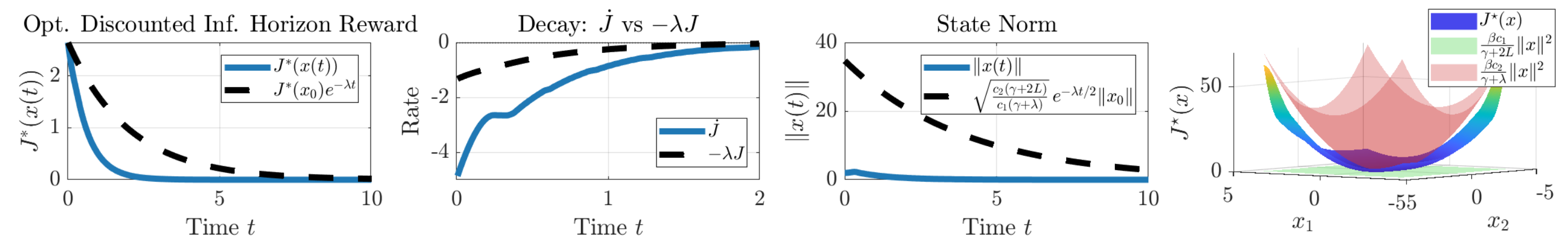}
    \caption{Theoretical bounds plotted against the numerical solution to the optimal control problem for the continuous time double integrator with the CLF-RL costs. We can see that theoretical bounds are satisfied and the state exponentially converges to the origin.}
    \label{fig:ct_double_integrator_bounds}
    \vspace{-3mm}
\end{figure*}

\begin{assumption}
\label{assump:optimal_regularity}
    The optimal value function $J^*$ is finite on $\mathcal{X}$ and the optimal control problem admits a minimizing control policy $\pi^*: \mathcal{X} \rightarrow \mathcal{U}$. The resulting closed-loop vector field is
    \begin{equation}
    \label{eq:optimal_cl_sys}
        f^*(x) := f(x) + g(x)\pi^*(x)
    \end{equation}
\end{assumption}

We start by using the principle of optimality as to break the value function down into two parts: the cost over a small time interval $h$, and the optimal cost:
\begin{equation*}
    J^*(x) = \inf_{u \in \mathcal{U}} \left\{ \int_0^h e^{-\gamma t} \ell(x(t), u(t)) dt + e^{-\gamma h}J^*(x(h))\right\}.
\end{equation*}
Linearization and simplification yields the discounted HJB equation (see \cite{primbs_nonlinear_1999} for details):
\begin{equation}
\label{eq:hjb_inf}
    \gamma J^*(x) = \inf_{u \in \mathcal{U}} \left\{ \ell(x,u) + \nabla J^*(x)^T(f(x) + g(x)u) \right\}
\end{equation}

Consider the following sets defined for convenience. Let $S_d$ denote the $d$ sublevel set of the value function:
\begin{equation*}
    S_d = \{x \in \mathcal{X} : J^*(x) \leq d\}.
\end{equation*}
Let $\Omega_c$ denote the $c$ sublevel set of the Lypaunov function:
\begin{equation*}
    \Omega_c = \{x \in \mathcal{X} : V(x) \leq c\}.
\end{equation*}
\begin{theorem}
\label{thm:cont_time_exp_stab}
    Suppose assumptions \ref{assump:clf_existance} and \ref{assump:optimal_regularity} hold.
    Additionally, assume that $f^*$ is locally lipschitz continuous.
    Then there exists constants $d > 0$, $c > 0$, and $L > 0$ such that $S_d$ is compact and forward invariant under the action of $\pi^*$, and further, for all $x \in \Omega_c$:
    \begin{align}
    \label{eq:J_clf_bounds}
        \frac{\beta c_1}{\gamma + 2L} \|x\|^2 \leq J^*(x) \leq \frac{\beta c_2}{\gamma + \lambda} \|x\|^2 \\
        \dot{J}^*(x) \leq -\lambda J^*(x)
    \end{align}
    and therefore $x^* = 0$ is locally exponentially stable under the optimal CLF-RL policy $\pi^*$. In particular, the solution $x(t)$ to $\dot{x}(t) = f^*(x)$ satisfies:
    \begin{equation}
    \label{eq:ct_clf_convergence_state}
        \|x(t)\| \leq \sqrt{\frac{c_2(\gamma + 2L)}{c_1(\gamma + \lambda)}}e^{-\lambda t/2} \|x_0\| \quad \forall t \geq 0.
    \end{equation}
\end{theorem}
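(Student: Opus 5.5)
The plan is to derive the upper bound from the CLF controller $\mu$, the lower bound from a Lipschitz growth estimate, and the decrease condition from the HJB equation; exponential convergence then follows by comparison.

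\textbf{Upper bound.} Take $\mu$ from Assumption~\ref{assump:clf_existance} as a suboptimal policy, so $J^*(x)\le J_\mu(x)$. Along the flow of $\mu$ we have $\dot V\le-\alpha V\le -\lambda V$. Hence the hinge term $[\dot V+\lambda V]_+$ vanishes and $V(x^\mu(t))\le e^{-\lambda t}V(x)$. Integrating gives
\[
J_\mu(x)\le \beta V(x)\int_0^\infty e^{-(\gamma+\lambda)t}\,dt=\frac{\beta V(x)}{\gamma+\lambda}\le \frac{\beta c_2}{\gamma+\lambda}\|x\|^2 .
\]
This requires the $\mu$-trajectory to remain in $\mathcal{X}$, which is why we restrict to a sublevel set $\Omega_c$ contained in $\mathcal{X}$. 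Such a set is forward invariant under $\mu$ because $V$ decreases along its trajectories.

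\textbf{Lower bound.} Since $\ell\ge\beta V\ge\beta c_1\|x\|^2$, it suffices to bound how fast $\|x\|$ can shrink under $f^*$. Local Lipschitzness of $f^*$ together with $f^*(0)=0$ gives a constant $L$ with $\|f^*(x)\|\le L\|x\|$ on a compact neighborhood, which we take to contain $S_d$. Then $\frac{d}{dt}\|x\|^2\ge -2L\|x\|^2$, so $\|x(t)\|^2\ge e^{-2Lt}\|x_0\|^2$. Integrating against $e^{-\gamma t}$ yields $J^*(x)\ge \frac{\beta c_1}{\gamma+2L}\|x\|^2$. To keep the optimal trajectory inside the region where $L$ is valid, choose $d$ small enough that $S_d$ lies in a ball $B_r\subset\mathcal X$. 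The two quadratic bounds give $S_d\subset B_r$ for small $d$ and $\Omega_c\subset S_d$ for small $c$, since $c\beta/(\gamma+\lambda)\le d$ suffices. Compactness of $S_d$ follows from closedness together with the lower bound, provided $J^*$ is continuous; I will assume this as part of the regularity implicit in using the HJB equation.

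\textbf{Decay of $J^*$.} This is the main step. Along $\pi^*$, the HJB equation \eqref{eq:hjb_inf} gives
\[
\dot J^*=\nabla J^{*T}f^*=\gamma J^*-\ell(x,\pi^*(x)).
\]
Using only $\ell\ge\beta V$ yields $\dot J^*\le\gamma J^*-\beta V$. Combined with $J^*\le \beta V/(\gamma+\lambda)$, that is $\beta V\ge(\gamma+\lambda)J^*$, we get
\[
\dot J^*\le \gamma J^*-(\gamma+\lambda)J^*=-\lambda J^*.
\]
The upper bound in terms of $V$, rather than $\|x\|^2$, is what makes this work. The hinge term is simply dropped, since it is nonnegative. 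The delicate point is justifying the upper bound $J^*\le\beta V/(\gamma+\lambda)$ at every point of the optimal trajectory. If the trajectory starts in $\Omega_c$ it stays in $S_d$, because $J^*$ is nonincreasing, which gives forward invariance of $S_d$. However, it may leave $\Omega_c$, where the $\mu$-comparison was established. I would handle this by proving the $\mu$-comparison bound on all of $S_d$ after choosing $d$ so that $S_d\subset\Omega_{c'}\subset\mathcal X$ for some $c'$ with $\Omega_{c'}$ forward invariant under $\mu$. I expect this juggling of nested sublevel sets to be the main obstacle, along with the differentiability of $J^*$ needed for the HJB equation, which I would take from the regularity assumed alongside the HJB equation (or use a viscosity/Dini-derivative version of the argument).

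\textbf{Conclusion.} The comparison lemma gives $J^*(x(t))\le e^{-\lambda t}J^*(x_0)$. Sandwiching with the two quadratic bounds,
\[
\frac{\beta c_1}{\gamma+2L}\|x(t)\|^2\le e^{-\lambda t}\frac{\beta c_2}{\gamma+\lambda}\|x_0\|^2 ,
\]
and taking square roots gives \eqref{eq:ct_clf_convergence_state}, i.e.\ local exponential stability under $\pi^*$.
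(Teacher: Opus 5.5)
Your three ingredients are the paper's. The $\mu$-comparison upper bound is Lemma~\ref{lem:upper_bound_J}; your remark that the $\mu$-trajectory must stay in $\mathcal{X}$ is a fair point the paper glosses over. The HJB decay is Lemma~\ref{lem:J_decreasing}. Your growth estimate via $\frac{d}{dt}\|x\|^2\ge-2L\|x\|^2$ is a cleaner form of the paper's Dini-derivative step. The gap is in how you fix $d$ relative to $L$. You derive $J^*(x)\ge\frac{\beta c_1}{\gamma+2L}\|x\|^2$ only for initial points whose optimal trajectory never leaves the compact set on which $\|f^*(x)\|\le L\|x\|$. That is, the bound presupposes that $S_d$ is bounded and forward invariant. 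You then use this same bound (``the two quadratic bounds give $S_d\subset B_r$'') to show that $S_d$ is bounded. You use it again to get $S_d\subset\Omega_{c'}$, which is what makes $\dot J^*\le-\lambda J^*$, and hence forward invariance, hold on $S_d$. As written the argument is circular: nothing confines $S_d$ before $L$ is used.

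The paper breaks the loop with a step you omit, Lemma~\ref{lem:J_positive_definite}. Zero discounted cost forces $V(x^{\pi^*}(t))\equiv 0$, so $J^*$ is positive definite without any Lipschitz information. With continuity of $J^*$ this gives $m_r=\min_{\|x\|=r}J^*(x)>0$, and any $d<m_r$ confines $S_d$ (more precisely $S_d\cap\bar B_r$) to $B_r$. Forward invariance then comes from the decay. Only afterwards is $L$ taken on the compact $S_d$ and the quadratic lower bound derived. The same zero-cost argument at $x=0$, using $J^*(0)\le\beta V(0)/(\gamma+\lambda)=0$, is also what justifies the identity $f^*(0)=0$ that you assert without proof.

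Alternatively, you can keep your ordering. For any $x_0\in\bar B_r$, the estimate $\|x(t)\|\ge e^{-Lt}\|x_0\|$ holds even if the trajectory leaves $\bar B_r$. It is trivial whenever $\|x(t)\|\ge r$. Otherwise it follows by integrating the differential inequality from the last time the trajectory was on the sphere $\|x\|=r$. This makes the quadratic lower bound unconditional on $\bar B_r$, after which your nesting $\Omega_c\subset S_d\cap\bar B_r\subset\Omega_{c'}$ goes through.
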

The given optimal control problem, inspired by CLF-based rewards in the context of RL, admits an optimal controller which is exponentially stabilizing. To prove this, we make use of the following four lemmas.

\begin{lemma}
\label{lem:J_positive_definite}
    The value function $J^*$ is positive definite on $\mathcal{X}$: $J^*(0) = 0$, $J^*(x) > 0 \; \forall x \neq 0$.
\end{lemma}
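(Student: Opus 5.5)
We prove the two claims separately. Both use only the structure of the stage cost \eqref{eq:stage_cost}: it is a sum of two nonnegative terms, one of which is $\beta V(x)$ with $\beta>0$ and $V$ positive definite by Assumption~\ref{assump:clf_existance}.

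For $J^*(0)=0$, we first observe that $J^*\geq 0$ everywhere, since $\ell(x,u)\geq 0$ for all $(x,u)$. It therefore suffices to exhibit one admissible policy with zero cost from the origin. We take the CLF controller $\mu$ from Assumption~\ref{assump:clf_existance} (or simply $u=0$, which lies in $\mathcal{U}$). Because $V(0)=0$ and $V\ge 0$ is differentiable, $0$ is a minimizer of $V$, so $\nabla V(0)=0$. Consequently $\dot V(0,u)=\nabla V(0)^T(f(0)+g(0)u)=0$ for every $u$. With $u=0$ we also have $f(0)=0$, so the trajectory stays at $x=0$. Along this trajectory $\ell(0,0)=\beta\cdot 0+\rho[0+0]_+=0$, so $J^*(0)\le 0$, and hence $J^*(0)=0$.

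For $x_0\neq 0$, we lower bound the cost of an arbitrary admissible policy $\pi$ by a strictly positive constant independent of $\pi$. This requires a uniform bound on how fast trajectories can move toward the origin, and it is the main obstacle. We would use local Lipschitz continuity of the dynamics together with compactness or boundedness of $\mathcal{U}$. Let $B$ denote the closed ball of radius $\|x_0\|$ around $x_0$, intersected with $\mathcal{X}$. On $B$ we set $M:=\sup_{x\in B,\,u\in\mathcal{U}}\|f(x)+g(x)u\|$, which is finite by continuity and boundedness. Then every trajectory starting at $x_0$ satisfies $\|x(t)-x_0\|\le \|x_0\|/2$ for $t\in[0,\tau]$ with $\tau:=\|x_0\|/(2M)$. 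On this interval $\|x(t)\|\ge \|x_0\|/2$, so $V(x(t))\ge c_1\|x_0\|^2/4$. Dropping the nonnegative $\rho$-term gives
\begin{equation*}
J_\pi(x_0)\ \ge\ \int_0^\tau e^{-\gamma t}\beta V(x(t))\,dt\ \ge\ \beta\frac{c_1\|x_0\|^2}{4}\cdot\frac{1-e^{-\gamma\tau}}{\gamma}>0 .
\end{equation*}
The right-hand side does not depend on $\pi$, so taking the infimum over $\pi$ gives $J^*(x_0)>0$.

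If boundedness of $\mathcal{U}$ is not available in continuous time, we would instead argue directly with the optimal policy $\pi^*$ from Assumption~\ref{assump:optimal_regularity}. Since the closed-loop field $f^*$ is locally Lipschitz, the closed-loop trajectory from $x_0\neq 0$ is continuous and satisfies $x(0)=x_0\neq0$. Hence $V(x(t))>0$ on some interval $[0,\tau)$. Because $J^*(x_0)=J_{\pi^*}(x_0)$, this gives $J^*(x_0)\ge\int_0^\tau e^{-\gamma t}\beta V(x(t))\,dt>0$. This fallback is cleaner and is the version we expect to use in the proof.
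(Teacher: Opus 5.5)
Your proof is correct. The version you say you will actually use, the fallback, is essentially the paper's argument. The paper supposes $J^*(x)=0$ and uses $J^*(x)=J_{\pi^*}(x)$ to get that $V(x_{\pi^*}(t))$ vanishes almost everywhere. It then concludes $x_{\pi^*}\equiv 0$ by continuity, which is the contrapositive of your observation that $V(x(t))>0$ on an initial interval.

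Two side remarks:
\begin{itemize}
\item Your justification of $J^*(0)=0$ is more complete than the paper's one-liner. You actually exhibit a zero-cost policy by checking that the $\rho$-term vanishes, using $\nabla V(0)=0$ and $f(0)=0$.
\item Local Lipschitz continuity of $f^*$ is not needed in the fallback, since any solution of the closed-loop ODE is continuous. The paper deliberately keeps the lemmas free of that hypothesis and adds it only in Theorem~\ref{thm:cont_time_exp_stab}.
\end{itemize}

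Your primary argument is a genuinely different route. By bounding the speed of every trajectory near $x_0$, it gives a lower bound independent of $\pi$. It therefore shows that the infimum itself is positive without using that a minimizer exists. The price is boundedness of $\mathcal{U}$, plus continuity of $f,g$. The continuous-time setup does not assume boundedness: only $0\in\text{int}\,\mathcal{U}$ is given, and compactness of $\mathcal{U}$ appears only in the discrete-time section.

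Some such hypothesis is genuinely needed. Take $\dot x=u$, $V=x^2$, $\mathcal{U}=\mathbb{R}$. The feedbacks $u=-Kx$ with $K\ge\lambda/2$ incur no $\rho$-penalty and cost $\beta x_0^2/(\gamma+2K)\to 0$ as $K\to\infty$. So $\inf_\pi J_\pi\equiv 0$, even though Assumption~\ref{assump:clf_existance} holds. The paper, and your fallback, avoid this through the attainment part of Assumption~\ref{assump:optimal_regularity}. Your primary route would avoid it through bounded inputs instead.
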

\begin{proof}
    Since $V(x) \geq 0$ and $[\cdot]_+ \geq 0$, we have that $l(x, u) \geq 0$ and therefore $J^*(x) \geq 0$.
    The fact that $J^*(0) = 0$ follows immediately from $V(0) = 0$ and $J^* \geq 0$.
    It remains to show that $J^*(x) = 0$ implies $x = 0$. Let $x \in \mathcal{X}$ and suppose $J^*(x) = 0$. The value is obtained at the optimal feedback $\pi^*$, so along the corresponding optimal trajectory $x_{\pi^*}(t)$:
    \begin{equation*}
        0 = \int_0^\infty e^{-\gamma t} l\big(x_{\pi^*}(t), \pi^*(x_{\pi^*}(t))\big) dt
    \end{equation*}
    Since the integrand is nonnegative it must vanish almost everywhere, so $V(x_{\pi^*}(t)) = 0$ for almost every $t \geq 0$. Since $x_{\pi^*}(t)$ is continuous and $V(x)$ is continuous, the function $V(x_{\pi^*}(t))$ is continuous. A continuous nonegative function that is zero almost everywhere must be identically zero. Hence $V(x_{\pi^*}(t)) \equiv 0 \; \forall t \geq 0$. Since $V(x)$ is positive definite then $x_{\pi^*} \equiv 0 \; \forall t$. Therefore $J^*(x) > 0$ for every $x \neq 0$.
\end{proof}

\begin{lemma}
\label{lem:upper_bound_J}
    For all $x \in \mathcal{X}$,
    \begin{equation}
    \label{eq:J_start_lemma_bounds}
        J^*(x) \leq \frac{\beta}{\gamma + \lambda}V(x),
    \end{equation}
\end{lemma}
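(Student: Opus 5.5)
The plan is to bound $J^*$ from above by the cost of a particular admissible policy. Since $J^*(x) = \inf_\pi J_\pi(x)$, it suffices to find one policy whose cost is at most $\frac{\beta}{\gamma+\lambda}V(x)$. The natural choice is the CLF feedback $\mu$ from Assumption~\ref{assump:clf_existance}. Let $x^\mu(t)$ denote the closed-loop trajectory of \eqref{eq:cont_dynamics} under $u=\mu(x)$ starting at $x_0 = x$. We assume this trajectory exists for all $t \geq 0$ and remains in $\mathcal{X}$, which is the implicit content of calling $\mu$ admissible on $\mathcal{X}$.

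First, we show that the CLF-violation term vanishes along $x^\mu$. By Assumption~\ref{assump:clf_existance},
\[
\dot V(x,\mu(x)) = \nabla V(x)^T\big(f(x)+g(x)\mu(x)\big) \leq -\alpha V(x) \leq -\lambda V(x),
\]
using $0<\lambda\leq\alpha$ and $V\geq 0$. Hence $[\dot V + \lambda V]_+ = 0$, and the stage cost \eqref{eq:stage_cost} along $x^\mu$ reduces to $\ell = \beta V(x^\mu(t))$.

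Second, we obtain exponential decay of $V$. Along $x^\mu$ we have $\frac{d}{dt}V(x^\mu(t)) \leq -\alpha V(x^\mu(t)) \leq -\lambda V(x^\mu(t))$. By the comparison lemma (Gr\"onwall), this gives $V(x^\mu(t)) \leq e^{-\lambda t}V(x)$.

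Third, we integrate:
\[
J^*(x) \leq J_\mu(x) = \int_0^\infty e^{-\gamma t}\beta V(x^\mu(t))\,dt \leq \beta V(x)\int_0^\infty e^{-(\gamma+\lambda)t}\,dt = \frac{\beta}{\gamma+\lambda}V(x),
\]
which is exactly \eqref{eq:J_start_lemma_bounds}.

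The main obstacle is not the computation but the justification that $\mu$ is a legitimate competitor in the infimum. This requires that solutions under $\mu$ exist globally in time and stay in $\mathcal{X}$, and that $t\mapsto V(x^\mu(t))$ is absolutely continuous so the comparison argument applies. If $\mathcal{X}$ is not itself forward invariant, I would first restrict to a sublevel set $\Omega_c\subset\mathcal{X}$. Such a set is forward invariant under $\mu$ because $V$ is nonincreasing along $x^\mu$, and the argument goes through there. Otherwise, I would take forward completeness and invariance as part of the admissibility of $\mu$.
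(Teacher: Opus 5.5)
Your proof is correct and is essentially the paper's argument: use the CLF feedback $\mu$ as a competitor in the infimum, observe that the violation term $[\dot V + \lambda V]_+$ vanishes along $x^\mu$, apply the comparison lemma to get $V(x^\mu(t)) \le e^{-\lambda t}V(x)$, and integrate against $e^{-\gamma t}$. Your remarks on forward completeness and invariance of $\mathcal{X}$ under $\mu$ make explicit an assumption the paper leaves implicit in the word ``admissible''. Note, however, that your $\Omega_c$ fallback would give the bound only on $\Omega_c$, not on all of $\mathcal{X}$ as the lemma states.
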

\begin{proof}
    To obtain the upper bound we consider the CLF controller, $\mu(x)$ from Assumption \ref{assump:clf_existance},
    \begin{equation*}
    \nabla V(x) (f(x) + g(x) \mu(x)) \leq -\alpha V(x) \leq -\lambda V(x)
    \end{equation*}
    where $\lambda \leq \alpha$ by construction. Therefore under the action of $\mu(x)$ the term $[\dot{V}(x, \mu(x)) + \lambda V(x)]_+ \equiv 0$. Then from the comparison lemma,
    \begin{equation*}
        V(x_\mu(t)) \leq e^{-\alpha t}V(x) \leq e^{-\lambda t}V(x).
    \end{equation*}
    From this, we upper bound the value function
    \begin{align*}
        J^*(x) &\leq J_\mu(x) = \int_0^\infty e^{-\gamma t}\beta V(x_\mu(t)) dt \\
        &\leq \beta \int_0^\infty e^{-(\gamma + \lambda)t}dtV(x) = \frac{\beta}{\gamma + \lambda}V(x)
    \end{align*}
    and \eqref{eq:J_start_lemma_bounds} is shown.
\end{proof}

\begin{figure*}
\vspace{2mm}
    \centering
    \includegraphics[width=1.0\linewidth]{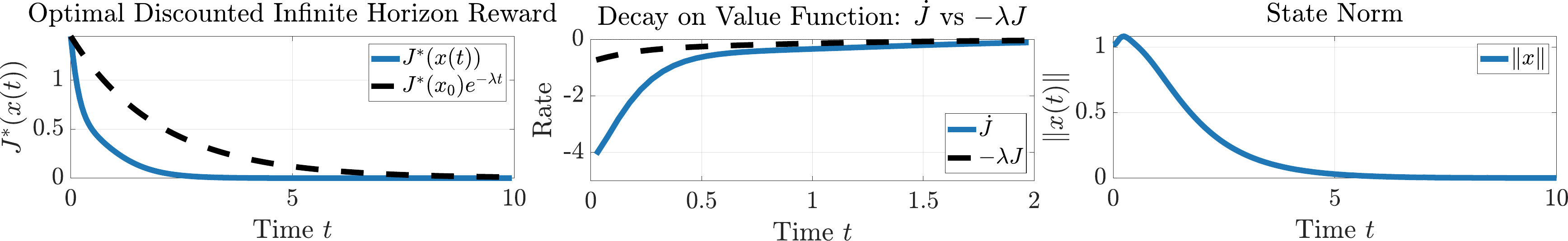}
    \caption{The optimal control of the continuous time cart-pole system using the CLF-RL costs plotted with the theoretical bounds. The cart-pole system exponentially stabilizes to the origin under the optimal control. No state norm bound is given as the Lipschitz constant of the optimal closed loop system is not computed.}
    \label{fig:ct_cart_pole_bounds}
    \vspace{-5mm}
\end{figure*}

\begin{lemma}
\label{lem:J_decreasing}
    Let $x(t)$ solve the optimal closed loop system \eqref{eq:optimal_cl_sys}. Then
    \begin{equation}
    \label{eq:value_fcn_dec}
        \dot{J}^*(x) \leq -\lambda J^*(x)
    \end{equation}
    and
    \begin{equation}
    \label{eq:value_fcn_traj_bound}
        J^*(x(t)) \leq e^{-\gamma t}J^*(x_0) \quad \forall t \geq 0.
    \end{equation}
\end{lemma}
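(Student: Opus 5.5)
The plan is to combine the principle of optimality (or, equivalently, the HJB equation \eqref{eq:hjb_inf}) with the upper bound of Lemma \ref{lem:upper_bound_J}. The key observation is that the stage cost dominates the value function. Since $[\cdot]_+\ge 0$, we have $\ell(x,u)\ge \beta V(x)$ for every $u$. Lemma \ref{lem:upper_bound_J} gives $\beta V(x)\ge(\gamma+\lambda)J^*(x)$. Hence
\begin{equation*}
\ell(x,u)\ge(\gamma+\lambda)J^*(x)\quad \forall x\in\mathcal{X},\ u\in\mathcal{U}.
\end{equation*}

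Formally, along the optimal closed loop \eqref{eq:optimal_cl_sys} the infimum in \eqref{eq:hjb_inf} is attained at $u=\pi^*(x)$. This gives $\gamma J^*(x)=\ell(x,\pi^*(x))+\dot J^*(x)$, so
\begin{equation*}
\dot J^*(x)=\gamma J^*(x)-\ell(x,\pi^*(x))\le \gamma J^*(x)-(\gamma+\lambda)J^*(x)=-\lambda J^*(x),
\end{equation*}
which is \eqref{eq:value_fcn_dec}.

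To avoid assuming that $J^*$ is differentiable, I would instead argue from the principle of optimality applied along the optimal trajectory with horizon $t$:
\begin{equation*}
J^*(x_0)=\int_0^t e^{-\gamma s}\ell(x(s),\pi^*(x(s)))\,ds+e^{-\gamma t}J^*(x(t)).
\end{equation*}
Set $\psi(t):=e^{-\gamma t}J^*(x(t))$. This identity shows that $\psi$ is absolutely continuous with $\dot\psi(t)=-e^{-\gamma t}\ell(x(t),\pi^*(x(t)))$ almost everywhere. Using the domination inequality above, $\dot\psi\le-(\gamma+\lambda)\psi$ almost everywhere. Undoing the weighting, $J^*(x(t))$ is absolutely continuous with derivative $\gamma J^*-\ell\le-\lambda J^*$ almost everywhere, which is \eqref{eq:value_fcn_dec} in the a.e. sense.

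A Gr\"onwall/comparison argument applied to $\psi$ then yields $\psi(t)\le e^{-(\gamma+\lambda)t}\psi(0)$, that is,
\begin{equation*}
J^*(x(t))\le e^{-\lambda t}J^*(x_0).
\end{equation*}
This gives the trajectory bound \eqref{eq:value_fcn_traj_bound} with rate $\lambda$. The bound with $e^{-\gamma t}$, as literally stated, follows immediately whenever $\gamma\le\lambda$. In general, however, I expect the natural decay rate to be $\lambda$, which matches the $e^{-\lambda t/2}$ rate in \eqref{eq:ct_clf_convergence_state}. I would present the bound with $e^{-\lambda t}$ and point out that it coincides with the stated exponent under $\gamma\le\lambda$.

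The main obstacle is regularity. The HJB step requires $J^*$ to be differentiable, which is not assumed. I would therefore rely on the integral (dynamic programming) form together with the existence of the minimizing policy $\pi^*$ from Assumption \ref{assump:optimal_regularity}. Local Lipschitz continuity of $f^*$ guarantees a unique absolutely continuous trajectory $x(t)$, which is what the integral argument needs. One must also ensure that the trajectory stays in $\mathcal{X}$ so that Lemma \ref{lem:upper_bound_J} applies along it. I would handle this by restricting to a sublevel set of $J^*$ contained in $\mathcal{X}$. Such a set is invariant because $J^*(x(t))$ is nonincreasing by the inequality just derived; this argument is continued up to the first exit time, and an exit is impossible since $J^*$ does not increase.
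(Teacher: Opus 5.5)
Your argument is correct, and its core matches the paper's proof. The paper also uses Lemma~\ref{lem:upper_bound_J} to get $\ell(x,\pi^*(x))\ge\beta V(x)\ge(\gamma+\lambda)J^*(x)$. It then inserts this into the HJB equation \eqref{eq:hjb_inf} at $\pi^*$, uses the chain rule to get $\dot J^*=\gamma J^*-\ell\le-\lambda J^*$, and finishes with the comparison lemma. That is exactly your ``formal'' paragraph.

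Where you genuinely differ is that you replace the HJB/chain-rule step with the dynamic-programming identity $J^*(x_0)=\int_0^t e^{-\gamma s}\ell\,ds+e^{-\gamma t}J^*(x(t))$. You then apply Gr\"onwall to the absolutely continuous function $\psi(t)=e^{-\gamma t}J^*(x(t))$. The paper's route is shorter, but it tacitly assumes that $J^*$ is differentiable and that the HJB holds with equality at $\pi^*$. Assumption~\ref{assump:optimal_regularity} provides neither. Your route needs only three things: $J^*$ is finite, $\pi^*$ is optimal from every initial state (so the cost tail is exactly $e^{-\gamma t}J^*(x(t))$), and the closed-loop flow is well defined. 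It then yields \eqref{eq:value_fcn_dec} almost everywhere and the trajectory bound directly.

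You are also right about the exponent. Comparison applied to $\dot J^*\le-\lambda J^*$ can only give $J^*(x(t))\le e^{-\lambda t}J^*(x_0)$. That is the rate the paper actually uses in the proofs of Lemma~\ref{lem:cont_time_asymp_stab} and Theorem~\ref{thm:cont_time_exp_stab}. The $e^{-\gamma t}$ in \eqref{eq:value_fcn_traj_bound} is a typo, valid as stated only when $\gamma\le\lambda$.

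Two small remarks on your regularity paragraph. First, local Lipschitz continuity of $f^*$ is a hypothesis of Theorem~\ref{thm:cont_time_exp_stab}, not of this lemma. The paper explicitly wants Lemma~\ref{lem:cont_time_asymp_stab}, which rests on this one, to hold without it. You do not need it: the lemma hands you the solution $x(t)$, and defining $J_\pi$ through the closed-loop flow already presupposes well-posedness.

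Second, $x(t)\in\mathcal{X}$ is built into the hypothesis, since $f^*$ and $\pi^*$ are defined only on $\mathcal{X}$ and $J_{\pi^*}(x_0)$ is finite. So the sublevel-set/first-exit argument is unnecessary. As phrased, it would also need the sublevel set to be bounded away from $\partial\mathcal{X}$, because every sublevel set of $J^*$ is trivially contained in $\mathcal{X}$.
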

\begin{proof}
    By \eqref{eq:hjb_inf} the minimizing control, $\pi^*$, satisfies
    \begin{equation*}
        \gamma J^*(x(t)) = \ell(x, \pi^*(x)) + \nabla J^*(x)^Tf^*(x).
    \end{equation*}
    Then along the closed-loop trajectories, the chain rule yields
    \begin{equation}
    \label{eq:cl_value_fcn_derivative}
        \frac{d}{dt}J^*(x(t)) = \gamma J^*(x(t)) - \ell(x(t), \pi^*(x(t))).
    \end{equation}
    Since $\ell(x, \pi^*(x)) \geq \beta V(x)$ by \eqref{eq:stage_cost}, Lemma \ref{lem:upper_bound_J} implies
    \begin{equation}
    \label{eq:stage_lower_bound}
        \ell(x, \pi^*(x)) \geq \beta V(x) \geq (\gamma + \lambda)J^*(x).
    \end{equation}
    Then we can substitute \eqref{eq:stage_lower_bound} into \eqref{eq:cl_value_fcn_derivative} to get \eqref{eq:value_fcn_dec}. Applying comparison lemma on $\eqref{eq:value_fcn_dec}$ yields \eqref{eq:value_fcn_traj_bound}.
\end{proof}

\begin{lemma}
\label{lem:cont_time_asymp_stab}
    The optimal policy, $\pi^*$ renders the origin locally asymptotically stable. Moreover, $J^*$ is a Lyapunov function for the optimal closed loop system and decays exponentially according to \eqref{eq:value_fcn_traj_bound}.
\end{lemma}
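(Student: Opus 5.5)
We use $J^*$ itself as the Lyapunov function for the closed-loop system \eqref{eq:optimal_cl_sys} and verify the standard Lyapunov conditions on a compact sublevel set. Three ingredients are already available. Lemma \ref{lem:J_positive_definite} gives positive definiteness of $J^*$. Lemma \ref{lem:upper_bound_J}, together with Assumption \ref{assump:clf_existance}, gives the upper bound $J^*(x) \leq \frac{\beta c_2}{\gamma+\lambda}\|x\|^2$, so $J^*$ is continuous at the origin and vanishes there. Lemma \ref{lem:J_decreasing} gives the strict decrease $\dot J^* \leq -\lambda J^* < 0$ for $x \neq 0$. What remains is to secure a compact, forward-invariant neighborhood on which these facts apply. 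The standard Lyapunov theorem then yields local asymptotic stability, and exponential decay of $J^*$ along trajectories follows from the comparison lemma.

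\textbf{Step 1: a compact sublevel set inside $\mathcal{X}$.} Since $\mathcal{X}$ is a region containing the origin, pick $r>0$ with the closed ball $\bar B_r \subset \mathcal{X}$. We need $J^*$ to be bounded below by a positive constant on the sphere $\|x\|=r$. Continuity of $J^*$ is not assumed, so we cannot simply take a minimum; this is the main obstacle. Instead we use a trajectory argument. By local Lipschitz continuity of $f^*$ on $\bar B_r$ with constant $L$, a trajectory starting at $\|x_0\|=r$ satisfies $\|x(t)\| \geq r e^{-Lt}$. Hence $V(x(t)) \geq c_1 r^2 e^{-2Lt}$ as long as the trajectory stays in $\bar B_r$, which holds at least until it would leave, and we may restrict attention to exactly that interval. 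Integrating the stage cost bound $\ell \geq \beta V$ over that interval gives $J^*(x_0) \geq \frac{\beta c_1}{\gamma+2L}\|x_0\|^2$, which is also the lower bound appearing in the theorem. Set $d := \frac{\beta c_1 r^2}{\gamma+2L}$ (or anything strictly smaller). Then $S_d \cap \bar B_r$ lies strictly inside $B_r$, and by the upper bound it contains a ball $B_\delta$.

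\textbf{Step 2: forward invariance and stability.} Along solutions starting in $S_d \cap B_r$, Lemma \ref{lem:J_decreasing} gives that $J^*(x(t))$ is nonincreasing. The trajectory therefore cannot reach $\|x\|=r$, since that would require $J^* \geq$ the Step 1 bound, which exceeds $d$. So the trajectory remains in the compact set $S_d \cap \bar B_r$, and existence of solutions for all $t\geq 0$ follows from local Lipschitz continuity. Combining the sandwich $\frac{\beta c_1}{\gamma+2L}\|x\|^2 \leq J^*(x) \leq \frac{\beta c_2}{\gamma+\lambda}\|x\|^2$ with monotone decrease gives Lyapunov stability through the usual $\epsilon$--$\delta$ argument. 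Attractivity follows from $J^*(x(t)) \leq e^{-\lambda t}J^*(x_0) \to 0$, obtained by the comparison lemma applied to $\dot J^* \leq -\lambda J^*$, together with the lower bound, which forces $\|x(t)\| \to 0$. This establishes local asymptotic stability, with $J^*$ as a Lyapunov function decaying exponentially along closed-loop trajectories.

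\textbf{Technical care.} Throughout, $\dot J^*$ is interpreted via the HJB identity used in Lemma \ref{lem:J_decreasing}, which presumes differentiability of $J^*$ along trajectories. If that fails, we would instead argue directly from the principle of optimality over intervals $[0,h]$ to obtain the integral form $J^*(x(h)) \leq e^{-\lambda h} J^*(x_0)$, which suffices for every step above.
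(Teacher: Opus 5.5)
Your argument is valid (after the repair below), but it takes a different route from the paper's and rests on a different hypothesis. The paper does not use Lipschitz continuity of $f^*$ here. It remarks right after the lemma that asymptotic stability holds without it, and adds that assumption only for Theorem~\ref{thm:cont_time_exp_stab}. It uses continuity of $J^*$ instead:
\begin{itemize}
\item with $\bar B_r \subset \mathcal{X}$, Lemma~\ref{lem:J_positive_definite} and compactness of the sphere give $m_r = \min_{\|x\|=r} J^*(x) > 0$;
\item any $d \in (0,m_r)$ gives a forward-invariant sublevel set;
\item continuity and positive definiteness on $\bar B_r$ give class-$\mathcal{K}$ bounds $\underline{\alpha}_r(\|x\|) \le J^*(x) \le \bar{\alpha}_r(\|x\|)$, hence $\|x(t)\| \le \underline{\alpha}_r^{-1}(e^{-\lambda t}\bar{\alpha}_r(\|x_0\|))$.
\end{itemize}
You are right that Assumption~\ref{assump:optimal_regularity} does not list continuity of $J^*$. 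However, your Step~2 leans on Lemma~\ref{lem:J_decreasing}, which comes from the HJB identity with $\nabla J^*$ and the chain rule. So you have already presupposed differentiability, and the compactness argument was available to you at no cost.

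Your Step~1 is essentially the lower-bound argument from the proof of Theorem~\ref{thm:cont_time_exp_stab}. What your route buys is a quadratic sandwich on $J^*$, hence exponential rather than merely asymptotic stability. If you use the integral form from your last paragraph, it also needs no regularity of $J^*$ at all. That form does go through: $e^{-\gamma t}J^*(x(t))$ is absolutely continuous with a.e.\ derivative $-e^{-\gamma t}\ell \le -(\gamma+\lambda)e^{-\gamma t}J^*(x(t))$ by Lemma~\ref{lem:upper_bound_J}. What it costs is exactly the hypothesis this lemma is meant to avoid. So it does not prove the lemma in the generality the paper claims; it makes the lemma a byproduct of the theorem rather than a step toward it.

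\textbf{Step 1 repair.} Integrating only up to the first exit from $\bar B_r$ does not give $\frac{\beta c_1 r^2}{\gamma+2L}$. A trajectory starting on the sphere may leave at once, which makes that interval arbitrarily short. Instead prove $\|x(t)\| \ge r e^{-Lt}$ for all $t \ge 0$. This is trivial when $\|x(t)\| \ge r$. If $\|x(t_1)\| < r$, let $s$ be the last time in $[0,t_1]$ with $\|x(s)\| = r$. On $[s,t_1]$ the trajectory lies in $\bar B_r$, so the Dini bound gives $\|x(t_1)\| \ge r e^{-L(t_1-s)} \ge r e^{-Lt_1}$. Then integrate over all of $[0,\infty)$. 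This step also uses $f^*(0)=0$, which follows from the proof of Lemma~\ref{lem:J_positive_definite}.

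\textbf{Smaller points.}
\begin{itemize}
\item Take $d$ strictly below that bound, since your exclusion argument in Step~2 needs strict inequality.
\item Without continuity of $J^*$, the set $S_d \cap \bar B_r$ need not be closed, so do not call it compact. You only need trajectories confined to $\bar B_r$, which you have.
\item Intersecting with $B_r$ is more careful than the paper's assertion $S_d \subset B_r$, which ignores possible parts of $S_d$ outside $\bar B_r$.
\end{itemize}
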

\begin{proof}
    From Lemma \ref{lem:J_positive_definite}, $J^*$ is positive definite. By Lemma \ref{lem:J_decreasing}, $\dot{J}^*(x) \leq -\lambda J^*(x)$. Therefore $J^*$ is a Lyapunov function.

    Choose $r > 0$ so that the closed ball $\bar{B}_r \subset \mathcal{X}$, and define
    \begin{equation}
    \label{eq:min_on_sphere}
        m_r := \min_{\|x\|=r}J^*(x) > 0.
    \end{equation}
    Then pick $d \in (0, m_r)$. The sublevel set
    \begin{equation*}
        S_d := \{x \in \mathcal{X} : J^*(x) \leq d\}
    \end{equation*}
    is contained in $B_r$ and is forward invariant by Lemma $\ref{lem:J_decreasing}$. Since $J^*$ is continuous and positive definite on $\bar{B}_r$, there exist class-$\mathcal{K}$ functions $\underline{\alpha}_r$ and $\bar{\alpha}_r$ such that
    \begin{equation*}
        \underline{\alpha}_r(\|x\|) \leq J^*(x) \leq \bar{\alpha}_r(\|x\|) \quad \forall x \in \bar{B}_r.
    \end{equation*}
    Thus for every $x_0 \in S_d$ the flow of the system under the action of the optimal controller satisfies
    \begin{equation*}
        \|x(t)\| \leq \underline{\alpha}_r^{-1}(e^{-\lambda t}\bar{\alpha}_r(\|x_0\|)), \quad \forall t \geq 0.
    \end{equation*}
    demonstrating asymptotic stability.
\end{proof}

Lemma \ref{lem:cont_time_asymp_stab} proves that the resulting controller is asymptotically stable without requiring Lipschitz continuity of the closed loop dynamics. Under this additional assumption, Theorem \ref{thm:cont_time_exp_stab} asserts exponential stability.  

\begin{proof}[Proof of Theorem \ref{thm:cont_time_exp_stab}]
    Let $r > 0$ such that $\bar{B}_r \subset \mathcal{X}$. From Lemma \ref{lem:cont_time_asymp_stab}, $S_d \subset B_r$ and is compact. It is also forward invariant as $J^*(x(t))$ is non-increasing along the optimal closed loop dynamics. As $S_d$ is a neighborhood of the origin and $V$ is continuous with $V(0) = 0$, there exists a $c > 0$ such that the sublevel set $\Omega_c = \{x \in \mathcal{X} : V(x) < c\}$ satisfies
    \begin{equation}
    \label{eq:clf_sublevel_in_sd}
        \Omega_c \subset S_d.        
    \end{equation}

    Since $f^*$ is locally Lipschitz by assumption and $S_d$ is compact, there exists $L > 0$ such that $\|f^*(x)\| \leq L\|x\| \; \forall x \in S_d$ where $L$ is the Lipschitz constant of $f^*$.

    Now fix $x_0 \in \Omega_c$ and let $x(t)$ denote the optimal closed loop trajectory. Then from \eqref{eq:clf_sublevel_in_sd} and the forward invariance of $S_d$, $x(t) \in S_d \; \forall t \geq 0$. Denote $r(t) := \|x(t)\|$ then the upper right Dini derivative satisfies:
    \begin{equation*}
        D^+r(t) \geq - \|\dot{x}(t)\| = -\|f^*(x(t))\| \geq -Lr(t),
    \end{equation*}
    where the Lipschitz bound was used for the last step. Then by the comparison lemma:
    \begin{equation*}
        \|x(t)\| \geq e^{-Lt}\|x_0\| \quad \forall t \geq 0
    \end{equation*}
    Now we can get the quadratic lower bound on $J^*$:
    \begin{align*}
        J^*(x_0) &= J_{\pi^*}(x_0)\geq \int_0^\infty e^{-\gamma t}\beta V(x(t))dt \\
        &\geq \beta c_1 \int_0^\infty e^{-\gamma t}\|x(t)\|^2 dt \\
        &\geq \beta c_1 \int_0^\infty e^{-(\gamma + 2L)t}dt \|x_0\|^2 \\
        &= \frac{\beta c_1}{\gamma + 2L}\|x_0\|^2.
    \end{align*}
    The upper quadratic bound follows from Lemma \ref{lem:upper_bound_J} and \eqref{eq:clf_cond}:
    \begin{equation*}
        J^*(x_0) \leq \frac{\beta}{\gamma + \lambda}V(x_0) \leq \frac{\beta c_2}{\gamma + \lambda}\|x_0\|^2
    \end{equation*}
    Finally, Lemma \ref{lem:J_decreasing} gives \eqref{eq:value_fcn_traj_bound} which can be combined with the quadratic bounds in \eqref{eq:J_clf_bounds} to yield the following:
    \begin{align*}
        \|x(t)\|^2 &\leq \frac{\gamma + 2L}{\beta c_1} J^*(x(t)) \leq \frac{\gamma + 2L}{\beta c_1}e^{-\lambda t}J^*(x_0) \\
        &\leq \frac{\gamma + 2L}{\beta c_1}e^{-\lambda t}\frac{\beta c_2}{\gamma + \lambda}\|x_0\|^2.
    \end{align*}
    Taking a square root results in \eqref{eq:ct_clf_convergence_state}. Therefore the origin is locally exponentially stable under the optimal policy $\pi^*$.
\end{proof}

We can apply these concepts on simple examples to verify the claims. Specifically, we use the double integrator and the cart pole systems. Fig. \ref{fig:ct_double_integrator_bounds} shows the results of numerical methods applied to the optimal control problem for the double integrator. We plot the theoretical bounds and show that closed loop trajectories satisfy these bounds. To get the numerical solutions, the HJB equation is solved iteratively over a grid in state space using the semi-Lagrangian method \cite{falcone_semi-lagrangianapproximation_2014}. This approach approximates the value function numerically, allowing us to verify quadratic bounds on the value function. The Lipschitz constant of the closed loop system used in the state norm bound is computed numerically.

Fig. \ref{fig:ct_cart_pole_bounds} shows a similar result, but for the cart pole system. Since the system is higher dimensional, direct multiple shooting is used to compute the optimal control \cite{diehl_fast_2006}. This method does not approximate the value function, so we do not evaluate its quadratic bounds; however, solution curves satisfy the theoretical bounds.

\section{Discrete Time Stability}
We begin by setting up the discrete time problem with the same cost function form as the continuous time problem. After proving stability of this nominal formulation, we progress to a practical formulation closer to what is done in practice with RL. Then lastly, we examine the robustness properties of the closed loop controller.
\subsection{Nominal Formulation}
We define the stage cost:
\begin{equation}
    \ell(x, u) = \beta V(x) + \rho[\Delta V(x, u) + \lambda V(x)]_+,
\end{equation}
and formulate the discounted optimal-control problem as
\begin{equation*}
    J_{\pi}(x_0) = \sum_{k = 0}^\infty \delta^k \ell(x_k^\pi, \pi(x_k^\pi))
\end{equation*}
where $\delta \in (0,1)$ is the discount factor.

\begin{assumption}
\label{assump:dt_clf_existance}
    There exists constants $c_1$, $c_2$, $\alpha > 0$ and an admissible feedback $\mu : \mathcal{X} \rightarrow \mathcal{U}$ such that, for all $x \in \mathcal{X}$,
    \begin{align*}
        c_1 \|x\|^2 &\leq V(x) \leq c_2 \|x\|^2 \\
        V(F(x, \mu(x))) &- V(x) \leq -\alpha V(x).
    \end{align*}
\end{assumption}

\begin{assumption}
\label{assump:dt_optimal_reg}
    The optimal value function $J^*$ is finite and continuous on $\mathcal{X}$, and the optimal control problem admits a minimizing control policy $\pi^*: \mathcal{X} \rightarrow \mathcal{U}$. We denote the resulting closed loop dynamics as
    \begin{equation*}
        F^*(x) := F(x, \pi^*(x)).
    \end{equation*}
\end{assumption}

We define the sublevel sets of the value function $S_d$ and Lyapunov function $\Omega_c$ corresponding to continuous time.

\begin{theorem}
\label{thm:dt_exp_stability}
    Suppose assumptions \ref{assump:dt_clf_existance} and \ref{assump:dt_optimal_reg} hold. Then there exists constants $d > 0$ and $c > 0$ such that $S_d$ is compact and forward invariant for the optimal closed loop system. Further $\Omega_c \subset S_d$ and, for all $x \in \Omega_c$:
    \begin{align}
    \label{eq:dt_J_bounds}
        \beta c_1 \| x\|^2 \leq J^*(x) &\leq \frac{\beta c_2}{1 - \delta(1 - \lambda)} \|x\|^2 \\
    \label{eq:dt_J_dec}
        J^*(F^*(x)) &\leq (1 - \lambda)J^*(x).
    \end{align}
    The origin is locally exponentially stable under the optimal policy $\pi^*$. In particular,
    \begin{equation}
    \label{eq:dt_state_bound}
        \|x_k\| \leq \sqrt{\frac{c_2}{c_1(1 - \delta(1 - \lambda))}}(1 - \lambda)^{k/2}\|x_0\| \quad \forall k \in \mathbb{Z}_{\geq 0}.
    \end{equation}
\end{theorem}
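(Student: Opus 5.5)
The proof rests on the Bellman equation for the discounted problem, which plays the role that the HJB equation \eqref{eq:hjb_inf} played in continuous time. By Assumption \ref{assump:dt_optimal_reg}, the minimizing policy $\pi^*$ satisfies
\begin{equation*}
J^*(x) = \ell(x,\pi^*(x)) + \delta J^*(F^*(x)).
\end{equation*}
I would then mirror Lemmas \ref{lem:J_positive_definite}--\ref{lem:cont_time_asymp_stab} in discrete time. Throughout I take $0<\lambda\le\alpha\le 1$, as in the continuous-time construction.

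\emph{Step 1 (quadratic upper bound).} I evaluate the CLF feedback $\mu$ from Assumption \ref{assump:dt_clf_existance}. Under $\mu$ we have $\Delta V + \lambda V \le (\lambda-\alpha)V \le 0$, so the penalty term vanishes. Iterating the decrease condition gives $V(x_k^\mu)\le (1-\alpha)^k V(x)\le(1-\lambda)^kV(x)$. Hence
\begin{equation*}
J^*(x)\le J_\mu(x)\le \beta\sum_{k\ge0}\delta^k(1-\lambda)^kV(x)=\frac{\beta}{1-\delta(1-\lambda)}V(x)\le \frac{\beta c_2}{1-\delta(1-\lambda)}\|x\|^2 .
\end{equation*}

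\emph{Step 2 (quadratic lower bound).} This is easier than in continuous time and needs no Lipschitz constant. Every stage cost is nonnegative, so $J^*(x)\ge \ell(x,\pi^*(x))\ge \beta V(x)\ge\beta c_1\|x\|^2$. This gives the lower half of \eqref{eq:dt_J_bounds}, and it also gives positive definiteness of $J^*$ directly.

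\emph{Step 3 (decrease).} From the Bellman identity and Step 1,
\begin{equation*}
\delta J^*(F^*(x)) = J^*(x)-\ell(x,\pi^*(x))\le J^*(x)-\beta V(x)\le J^*(x)-\bigl(1-\delta(1-\lambda)\bigr)J^*(x)=\delta(1-\lambda)J^*(x).
\end{equation*}
Dividing by $\delta>0$ yields \eqref{eq:dt_J_dec}.

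\emph{Step 4 (sets and the state bound).} I choose $r>0$ with $\bar B_r\subset\mathcal{X}$ and take $d<\beta c_1 r^2$. By Step 2, $S_d\subset\{\|x\|\le\sqrt{d/(\beta c_1)}\}\subset \bar B_r$, so $S_d$ is bounded. It is closed because $J^*$ is continuous (Assumption \ref{assump:dt_optimal_reg}), so $S_d$ is compact. Forward invariance follows because $F^*$ maps into $\mathcal{X}$ and Step 3 gives $J^*(F^*(x))\le J^*(x)\le d$.

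Next I set $c=d(1-\delta(1-\lambda))/\beta$. Then $V(x)\le c$ implies $J^*(x)\le d$ by Step 1, so $\Omega_c\subset S_d$, and every trajectory starting in $\Omega_c$ stays in $S_d$. Iterating Step 3 gives $J^*(x_k)\le(1-\lambda)^kJ^*(x_0)$. Sandwiching with the two quadratic bounds gives
\begin{equation*}
\beta c_1\|x_k\|^2\le (1-\lambda)^k\frac{\beta c_2}{1-\delta(1-\lambda)}\|x_0\|^2,
\end{equation*}
and taking square roots yields \eqref{eq:dt_state_bound}.

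The only delicate point is the set-theoretic bookkeeping in Step 4. In discrete time the state can jump, so invariance cannot be argued by a continuity-in-time argument as in Lemma \ref{lem:cont_time_asymp_stab}. Instead it must come purely from the value decrease together with $F:\mathcal{X}\times\mathcal{U}\to\mathcal{X}$. Likewise, compactness of $S_d$ must come from the global quadratic lower bound together with the continuity of $J^*$. The analytic steps are short once the Bellman identity is in hand.
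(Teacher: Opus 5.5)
Your proposal is correct and follows the paper's proof essentially step for step. The CLF feedback $\mu$ gives the upper bound, nonnegativity of the stage cost gives the lower bound, and the Bellman identity combined with $\beta V \ge (1-\delta(1-\lambda))J^*$ gives the decrease. Your set bookkeeping differs only slightly: you choose $d<\beta c_1 r^2$ via the global quadratic lower bound and take the explicit $c = d(1-\delta(1-\lambda))/\beta$. This is a more careful version of the paper's $m_r$/continuity argument, and it is the right way to guarantee $S_d \subset B_r$ when the state can jump.
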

\begin{proof}
    We start with the upper and lower bound on $J^*$. The lower bound can be seen as follows
    \begin{equation*}
        J^*(x) \geq \inf_{u \in \mathcal{U}}\ell(x, u) \geq \beta V(x) \geq \beta c_1 \|x\|^2
    \end{equation*}
    
    To show the upper bound, we leverage the CLF feedback $\mu$ from Assumption \ref{assump:dt_clf_existance}:
    \begin{equation*}
        \Delta V(x, \mu(x)) \leq -\alpha V(x) \leq -\lambda V(x)
    \end{equation*}
    and therefore the decreasing term in the stage cost is zero. From this we bound the Lyapunov value over the trajectory
    \begin{equation*}
        V(x_k^\mu) \leq (1 - \alpha)^kV(x) \leq (1 - \lambda)^kV(x).
    \end{equation*}
    Which can be used to bound the value function
    \begin{align*}
        J^*(x) &\leq J_\mu(x) = \sum_{k = 0}^\infty \delta^k \beta V(x_k^\mu) \\
        &\leq \beta \sum_{k = 0}^\infty \delta^k (1 - \lambda)^k V(x) = \frac{\beta}{1 - \delta(1 - \lambda)}V(x) \\
        &\leq \frac{\beta c_2}{1 - \delta (1 - \lambda)}\|x\|^2
    \end{align*}
    which proves \eqref{eq:dt_J_bounds}. These bounds show that $J^*$ is positive definite on $\mathcal{X}$. Hence $m_r$ (using the same definition in \eqref{eq:min_on_sphere}, but on the discrete time value function) is still positive for $r > 0$. The sublevel set $S_d$ is contained in $B_r$ and hence compact. Since $V$ is continuous and $V(0) = 0$ there exists a $c > 0$ such that $\Omega_c \subset S_d$.

    Next, we evaluate the Bellman equation on $\pi^*$:
    \begin{equation}
    \label{eq:dt_bellman_optimal}
        J^*(x) = \ell(x, \pi^*(x)) + \delta J^*(F^*(x)).
    \end{equation}
    Since $\ell(x, u) \geq \beta V(x)$, the upper bound in \eqref{eq:dt_J_bounds} implies
    \begin{equation}
    \label{eq:dt_stage_cost_lb}
        \ell(x, \pi^*(x,u)) \geq (1 - \delta(1 - \lambda))J^*(x).
    \end{equation}
    Substituting \eqref{eq:dt_stage_cost_lb} into \eqref{eq:dt_bellman_optimal} yields the decreasing condition on the value function
    \begin{equation*}
        J^*(F^*(x)) \leq (1 - \lambda)J^*(x),
    \end{equation*}
    which proves \eqref{eq:dt_J_dec}. So if $J^*(x) \leq d$ then this implies $J^*(F^*(x)) \leq d$ and therefore $S_d$ is forward invariant.

    Finally, let $\{x_k\}_{k \geq 0}$ be any solution of the optimal closed loop dynamics with $x_0 \in \Omega_c$. Iterating \eqref{eq:dt_J_dec} gives
    \begin{equation*}
        J^*(x_k) \leq (1 - \lambda)^kJ^*(x_0).
    \end{equation*}
    Then we combine these with the quadratic bounds to bound the state norm over the trajectory
    \begin{align*}
        \|x_k\|^2 &\leq \frac{1}{\beta c_1}J^*(x_k) \\
        &\leq \frac{1}{\beta c_1}(1 - \lambda)^k J^*(x_0) \\
        &\leq \frac{1}{\beta c_1}(1 - \lambda)^k \frac{\beta c_2}{1 - \delta(1 - \lambda)}\|x_0\|^2.
    \end{align*}
    Taking square roots yields \eqref{eq:dt_state_bound}. Therefore the origin is locally exponentially stable under the optimality policy.
\end{proof}

\begin{figure*}
    \centering
    \includegraphics[width=1.0\linewidth]{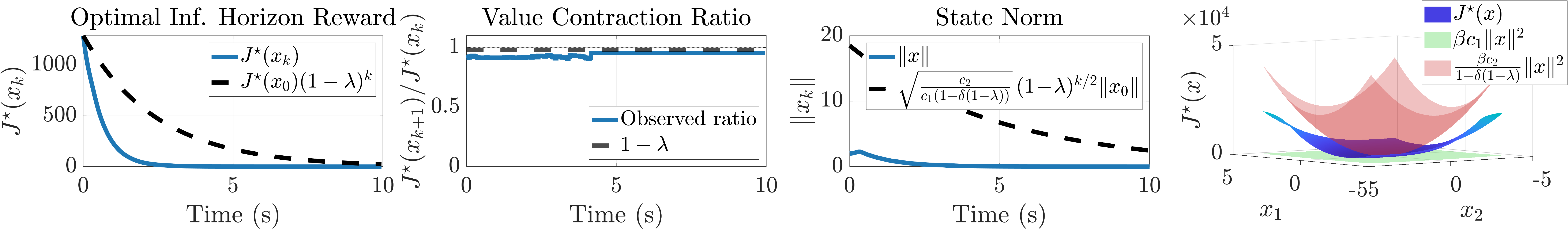}
    \caption{Theoretical bounds and the numerical solution to the discrete time double integrator system. The optimal control problem was solved numerically, yielding the value function surface and the optimal controller.}
    \label{fig:dt_double_integrator}
    \vspace{-5mm}
\end{figure*}

We verify the bounds derived here through a numerical example. Fig. \ref{fig:dt_double_integrator} shows the theoretical bounds plotted against numerical solutions for the discrete time double integrator. We solve this optimal control problem via dynamic programming, yielding an approximation of the value function which falls within the quadratic bounds.

\subsection{Practical Formulation}
Now we consider a formulation of the reward as done in the past for actual implementation with RL training  \cite{li_clf-rl_2026}. We add an additional, cost term, $r_{reg}$ which serves to regularize the resulting output. The form of the other rewards is also changed to better suite the numerics of RL training:
\begin{align*}
    r_V(x) &:= \beta \exp\left(-\frac{V(x)}{\sigma^2}\right) \\
    r_{\dot{V}}(x, u) &:= -\rho \: \text{clip}\left( \frac{\Delta V + \lambda V(x)}{\sigma_{\dot{V}}}, 0, 1\right) \\
    r_{reg} &:= -w_u \|u\|^2 \\
    \text{clip}(s, 0, 1) &:= \min\{\max\{s, 0\}, 1\}.
\end{align*}
The regularization terms could include other terms. We treat only an input penalty for simplicity.
The total reward is
\begin{equation*}
    R(x,u) = r_V + r_{\dot{V}} + r_{reg}
\end{equation*}

Now we convert the RL reward maximization problem to a cost minimization problem.
Noting that the maximum achievable reward is $\beta$, we define the stage cost and discounted cost over the horizon as
\begin{align*}
    \ell(x, u) &:= \beta - R(x, u) \\
    &\phantom{:}= \phi(V(x)) - r_{\dot{V}}(x, u) - r_{reg}(x, u) \\
    J(x_0) &:= \sum_{k = 0}^\infty \delta^k \ell(x_k^\pi, \pi(x_k^\pi))
\end{align*}
where
\begin{equation*}
    \phi(s) := \beta (1 - e^{s/\sigma^2}) \quad s \geq 0 \\
\end{equation*}

The shifted exponential $\phi(s)$ is locally equivalent to the linear state cost, as we now establish with a Lemma.

\begin{lemma}
\label{lem:phi_equivalence}
    Fix $\bar{c} > 0$ and define the CLF sublevel set
    \begin{equation*}
        \Omega_{\bar{c}} := \{x \in \mathcal{X} : V(x) \leq \bar{c}\}.
    \end{equation*}
    Then for every $x \in \Omega_{\bar{c}}$,
    \begin{equation*}
        \frac{\beta}{\sigma^2}e^{-\bar{c}/\sigma^2} V(x) \leq \phi(V(x)) \leq \frac{\beta}{\sigma^2} V(x).
    \end{equation*}
\end{lemma}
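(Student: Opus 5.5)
We read $\phi$ as $\phi(s)=\beta\bigl(1-e^{-s/\sigma^2}\bigr)$. This is what $\ell=\beta-R$ with $r_V=\beta\exp(-V/\sigma^2)$ actually gives, since $\beta-r_V(x)=\beta(1-e^{-V(x)/\sigma^2})$. The displayed definition with $e^{s/\sigma^2}$ appears to be a sign typo: with a positive exponent, $\phi$ would be nonpositive and the claimed two-sided bound could not hold. The claim then only concerns a scalar function of $s=V(x)$. So the plan is to prove, for all $s\in[0,\bar c]$,
\begin{equation*}
    \frac{\beta}{\sigma^2}e^{-\bar c/\sigma^2}\, s \;\leq\; \phi(s) \;\leq\; \frac{\beta}{\sigma^2}\, s ,
\end{equation*}
and then substitute $s=V(x)$. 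This substitution is legitimate because every $x\in\Omega_{\bar c}$ satisfies $0\le V(x)\le \bar c$, as $V$ is nonnegative by Assumption \ref{assump:dt_clf_existance}.

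For the scalar inequality we use the mean value theorem. The function $\phi$ is smooth with $\phi(0)=0$ and
\begin{equation*}
    \phi'(s)=\frac{\beta}{\sigma^2}e^{-s/\sigma^2}.
\end{equation*}
Its derivative is positive and strictly decreasing, so on $[0,\bar c]$ it lies between $\frac{\beta}{\sigma^2}e^{-\bar c/\sigma^2}$ and $\frac{\beta}{\sigma^2}$. For $s\in(0,\bar c]$ there is some $\xi\in(0,s)\subset(0,\bar c)$ with $\phi(s)=\phi'(\xi)\,s$. Bounding $\phi'(\xi)$ above and below by these two extremes gives both inequalities immediately. The case $s=0$ holds trivially, since every term vanishes.

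An equivalent route avoids the mean value theorem and uses the elementary inequalities $t e^{-t}\le 1-e^{-t}\le t$ for $t\ge 0$, applied with $t=s/\sigma^2$. In the lower bound one then uses $e^{-t}\ge e^{-\bar c/\sigma^2}$ for $t\le \bar c/\sigma^2$. Both inequalities follow from concavity of $1-e^{-t}$: the upper one is the tangent line at $0$, and the lower one is the tangent line at $t$ evaluated at $0$.

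There is no real analytic obstacle here. The only point requiring care is the sign convention in $\phi$ noted at the start, and the proof should state that correction explicitly.
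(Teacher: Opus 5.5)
Your proof is correct and is essentially the paper's argument. The paper gets the upper bound from $1-e^{-t}\le t$ and the lower bound by writing $1-e^{-t}=\int_0^t e^{-\tau}\,d\tau$ and bounding the integrand below by $e^{-\bar c/\sigma^2}$, which is the integral form of your mean-value-theorem bound on $\phi'$ over $[0,\bar c]$. Your sign correction $\phi(s)=\beta(1-e^{-s/\sigma^2})$ is exactly the form the paper's own proof uses, and your substitution $t=s/\sigma^2$ is cleaner than the paper's integral step, which writes $s$ where $V(x)/\sigma^2$ is meant.
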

\begin{proof}
    Since $1 - e^{-s} \leq s$ for all $s \geq 0$, the upper bound is:
    \begin{equation*}
        \phi(V(x)) = \beta (1 - e^{-V(x)/\sigma^2}) \leq \frac{\beta}{\sigma^2} V(x).
    \end{equation*}
    For the lower bound, leverage the following integral bound
    \begin{align*}
        1 - e^{-s} &= \int_0^s e^{-\tau} d\tau \geq e^{-\bar{c}/\sigma^2}s \\
        \phi(V(x)) &= \beta(1 - e^{-s}) \geq \frac{\beta}{\sigma^2}e^{-\bar{c}/\sigma^2}V(x)
    \end{align*}
    establishing the lower bound.
\end{proof}

Moving forward, denote the coefficients derived above $\zeta_{-}(\bar{c}) := \frac{\beta}{\sigma^2}e^{-\bar{c}/\sigma^2}$ and $\zeta_{+} := \frac{\beta}{\sigma^2}$.

\begin{assumption}
\label{assump:dt_regularizer_domination}
    There exists $\bar{c}, c_{reg} \geq 0$ such that on $\Omega_{\bar{c}}$,
    \begin{equation*}
        0 \leq -r_{reg}(x, \mu(x)) \leq c_{reg} V(x)
    \end{equation*}
    where $\mu$ is the stabilizing feedback from Assumption \ref{assump:dt_clf_existance}.
\end{assumption}

This assumption states that the regularization terms are dominated by the Lyapunov function value. Although here we are only explicitly considering $\|u\|$ as a regularization term, this condition can be satisfied for more general terms. 

Now we can state the practical reward stability result.
\begin{theorem}
\label{thm:dt_practical_stab}
    Suppose Assumptions \ref{assump:dt_clf_existance} and \ref{assump:dt_regularizer_domination} hold and
    \begin{equation*}
        q_{\bar{c}} := \delta^{-1}\left( 1 - \frac{(1 - \delta(1 - \lambda))\zeta_{-}(\bar{c})}{\zeta_{+} + c_{reg}} \right) < 1.
    \end{equation*}
    Then for all $x \in \Omega_c$:
    \begin{align}
    \label{eq:dt_practical_J_bounds}
        \zeta_{-}(\bar{c})c_1\|x\|^2 \leq J^*(x) &\leq \frac{(\zeta_{+} + c_{reg})c_2}{1 - \delta(1 - \lambda)} \|x\|^2,\\
        \label{eq:dt_practical_J_dec}
        J^*(F^*(x)) &\leq q_{\bar{c}}J^*(x).
    \end{align}
    Therefore the origin is a fixed point of the optimal closed loop system and is exponentially stable. In particular, every optimal trajectory satisfies, for all $\forall k \in \mathbb{Z}_{\geq 0}$
    \begin{equation}
    \label{eq:dt_practical_state_bounds}
        \|x_k\| \leq \sqrt{\frac{(\zeta_+ + c_{reg})c_2}{\zeta_{-}(\bar{c})c_1(1 - \delta (1 - \lambda))}}q_{\bar{c}}^{k/2}\|x_0\|
    \end{equation}
    when $x_0 \in \Omega_c$.
\end{theorem}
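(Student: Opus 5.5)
The proof follows the template of Theorem \ref{thm:dt_exp_stability}, with Lemma \ref{lem:phi_equivalence} and Assumption \ref{assump:dt_regularizer_domination} replacing the exact linear state cost. Throughout, we work on $\Omega_c$ with $c \le \bar{c}$, chosen small enough that the relevant trajectories stay inside $\Omega_{\bar{c}}$. There the sandwich $\zeta_-(\bar c) V \le \phi(V) \le \zeta_+ V$ holds. We read $\phi(s) = \beta(1-e^{-s/\sigma^2})$, which is consistent with $\ell = \beta - R$.

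\emph{Lower bound.} Every term of $\ell$ is nonnegative: $-r_{\dot V}\ge 0$ and $-r_{reg}\ge 0$. Hence
\[
J^*(x) \ge \inf_u \ell(x,u) \ge \phi(V(x)) \ge \zeta_-(\bar c)\, V(x) \ge \zeta_-(\bar c)\, c_1 \|x\|^2 .
\]

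\emph{Upper bound.} We evaluate the CLF feedback $\mu$ of Assumption \ref{assump:dt_clf_existance}. Since $\Delta V + \lambda V \le 0$ along $\mu$, the clipped term vanishes. Moreover $V(x_k^\mu) \le (1-\lambda)^k V(x) \le V(x)$, so $\Omega_{\bar c}$ is forward invariant under $\mu$. Therefore, at every step,
\[
\ell(x_k^\mu,\mu(x_k^\mu)) \le (\zeta_+ + c_{reg})\, V(x_k^\mu),
\]
using Lemma \ref{lem:phi_equivalence} for $\phi$ and Assumption \ref{assump:dt_regularizer_domination} for $r_{reg}$. Summing the geometric series gives
\[
J^*(x) \le \frac{\zeta_+ + c_{reg}}{1-\delta(1-\lambda)}\, V(x) \le \frac{(\zeta_+ + c_{reg})\, c_2}{1-\delta(1-\lambda)}\, \|x\|^2 .
\]
This establishes \eqref{eq:dt_practical_J_bounds}.

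\emph{Decrease.} The Bellman equation $J^*(x) = \ell(x,\pi^*(x)) + \delta J^*(F^*(x))$ together with $\ell \ge \zeta_-(\bar c)\, V$ and the upper bound $V \ge \frac{1-\delta(1-\lambda)}{\zeta_+ + c_{reg}}\, J^*$ yields
\[
\delta J^*(F^*(x)) \le \left(1 - \frac{(1-\delta(1-\lambda))\,\zeta_-(\bar c)}{\zeta_+ + c_{reg}}\right) J^*(x),
\]
which is exactly $J^*(F^*(x)) \le q_{\bar c}\, J^*(x)$, i.e.\ \eqref{eq:dt_practical_J_dec}. Because $q_{\bar c} < 1$ by hypothesis, $J^*$ strictly decreases. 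Setting $x = 0$ in the same inequality gives $J^*(F^*(0)) = 0$, and the lower bound then forces $F^*(0) = 0$, so the origin is a fixed point.

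\emph{Main obstacle: invariance.} The bounds above are only valid while the state remains in $\Omega_{\bar c}$, so forward invariance must be arranged before iterating. We pick $d$ with $S_d \subset \Omega_{\bar c}$. This is possible because, on a neighborhood of the origin, $J^* \ge \zeta_-(\bar c) V$ and $J^* \ge \beta(1-e^{-V/\sigma^2})$ with $\phi$ monotone. Concretely, we take $d < \beta(1-e^{-\bar c/\sigma^2})$; then $J^*(x) \le d$ forces $\phi(V(x)) \le d$, hence $V(x) < \bar c$. This bound uses only $\ell \ge \phi(V)$, so it holds globally on $\mathcal{X}$ without relying on the local sandwich.

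Next, we choose $c>0$ with $\Omega_c \subset S_d$, using the quadratic upper bound on $J^*$. The decrease inequality then makes $S_d$ forward invariant, so every trajectory starting in $\Omega_c$ remains in $S_d \subset \Omega_{\bar c}$. All the bounds therefore apply at every step of the iteration.

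\emph{State bound.} Iterating the decrease gives $J^*(x_k) \le q_{\bar c}^k J^*(x_0)$. Sandwiching with the two quadratic bounds,
\[
\zeta_-(\bar c)\, c_1 \|x_k\|^2 \le J^*(x_k) \le q_{\bar c}^k\, \frac{(\zeta_+ + c_{reg})\, c_2}{1-\delta(1-\lambda)}\, \|x_0\|^2 ,
\]
and taking square roots yields \eqref{eq:dt_practical_state_bounds}.
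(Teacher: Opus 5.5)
Your proposal is correct and follows the paper's proof essentially step for step. Both arguments use the same sequence: the $\mu$-based upper bound via Lemma~\ref{lem:phi_equivalence} and Assumption~\ref{assump:dt_regularizer_domination}, the lower bound from $\ell \ge \phi(V)$, the Bellman-equation decrease yielding $q_{\bar c}$, and iteration on a forward-invariant sublevel set $S_d \subset \Omega_{\bar c}$. Your explicit choice $d < \beta(1-e^{-\bar c/\sigma^2})$, via the global bound $J^* \ge \phi(V)$, and your check that $\mu$-trajectories remain in $\Omega_{\bar c}$ make precise what the paper only asserts by appeal to ``the same arguments'' as Theorem~\ref{thm:dt_exp_stability}.
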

\begin{proof}
    Once again we have $\Delta V(x, \mu(x)) \leq -\lambda V(x)$. From Lemma \ref{lem:phi_equivalence} and Assumption \ref{assump:dt_regularizer_domination} we get
    \begin{equation*}
        \ell(x, \mu(x)) \leq (\zeta_+ + c_{reg}) V(x).
    \end{equation*}
    Iterating the inequality on $V(x)$ yields $V(x_k^\mu) \leq (1 - \lambda)^kV(x)$, and therefore we can obtain the upper bound:
    \begin{align}
        J^*(x) &\leq \sum_{k = 0}^\infty \ell(x_k^\mu, \mu(x_k^\mu)) \nonumber \\
        &\leq (\zeta_{+} + c_{reg})\sum_{k = 0}^\infty \delta^k(1 - \lambda)^kV(x) \nonumber \\
        \label{eq:dt_practical_thm_J_upper_bound}
        &= \frac{\zeta_+ + c_{reg}}{1 - \delta(1 - \lambda)}V(x) \leq \frac{(\zeta_+ + c_{reg})c_2}{1 - \delta(1 - \lambda)}\|x\|^2
    \end{align}
    The lower bound comes immediately from the stage cost:
    \begin{equation*}
        J^*(x) \geq \inf_{u \in \mathcal{U}}\ell(x, u) \geq \phi(V(x)) \geq \zeta_{-}(\bar{c})c_1\|x\|^2.
    \end{equation*}
    To establish the decreasing condition \eqref{eq:dt_practical_J_dec}, consider the Bellman equation \cite{sutton1999reinforcement}:
    \begin{equation*}
        J^*(x) = \ell(x, \pi^*(x)) + \delta J^*(F^*(x)).
    \end{equation*}
    Then using $\ell \geq \phi(V(x)) \geq \zeta_{-}(\bar{c})V$ on $\Omega_{\bar{c}}$,
    \begin{equation*}
        \delta J^*(F^*(x)) \leq J^*(x) - \zeta_{-}(\bar{c}) V(x).
    \end{equation*}
    Now using the upper bound established in \eqref{eq:dt_practical_thm_J_upper_bound} we obtain
    \begin{equation*}
        V(x) \geq \frac{1 - \delta(1 - \lambda)}{\zeta_+ + c_{reg}}J^*(x)
    \end{equation*}
    and therefore $J^*(F^*(x)) \leq q_{\bar{c}}J^*(x)$. Finally, upper bound the value function using the CLF controller $\mu$:
    \begin{align*}
       J^*(x) &\leq \sum_{k = 0}^\infty \delta^k (\phi(V(x_k^\mu)) - r_{reg}(x_k^\mu, \mu(x_k))) \\
       &\leq \sum_{k=0}^\infty \delta^k(\zeta_{+} + c_{reg})V(x_k^{\mu}) \\
       &\leq \sum_{k=0}^\infty \delta^k(\zeta_{+} + c_{reg})(1 - \lambda)^k V(x)\\
       &= \frac{\zeta_{+} + c_{reg}}{1 - \delta(1 - \lambda)}V(x) \leq \frac{(\zeta_{+} + c_{reg})c_2}{1 - \delta(1 - \lambda)}\|x\|^2
    \end{align*}
    
    Using the same arguments at Thm. \ref{thm:dt_exp_stability} there exists a $d$ and $c$ such that $S_d$ is compact and forward invariant and $\Omega_c \subset S_d \subset \Omega_{\bar{c}}$. Finally, we can iterate \eqref{eq:dt_practical_J_dec} and combine with \eqref{eq:dt_practical_J_bounds} to get \eqref{eq:dt_practical_state_bounds}.
\end{proof}
\subsection{Robustness}
Since the resulting policy is exponentially stabilizing, we can apply standard input-to-state stability arguments to guarantee robustness to model mismatch and bounded suboptimality. This holds for both discrete time formulations.
\begin{definition}
    \cite{geiselhart2016relaxed} A discrete time dynamical system 
    \begin{equation*}
        x_{k+1} = F(x_k, d_k)
    \end{equation*}
    is exponentially input to state stable (E-ISS) with respect to a uniformly bounded disturbance $d$ if there exist constants $M, \lambda > 0$ and a class $\mathcal{K}$ function $\alpha \in \mathcal{K}$ such that:
    \begin{equation*}
        \|x_k\| \leq M \lambda^k + \alpha(\|d\|_\infty)
    \end{equation*}
\end{definition}
\begin{definition}
    A function $V: \mathcal{X} \to \mathcal{R}$ is an E-ISS Lyapunov function if there exist constants $k_1, k_2 > 0$, $k_3 \in (0, 1)$, and $\sigma \in \mathcal{K}$ such that
    \begin{align*}
        k_1\|x\|^2 &\leq V(x) \leq k_2 \|x\|^2 \\
        V(F(x, d)) &\leq k_3 V(x) + \sigma(d)
    \end{align*}
\end{definition}
If there exists an E-ISS Lyapunov function for the system, then the system is E-ISS \cite{geiselhart2016relaxed}. Now, we consider an additive disturbance to the dynamics.
\begin{corollary}
    Consider the addition of a bounded state-dependent disturbance to the dynamics
    \begin{equation*}
        x_{k+1} = F(x_k, u_k) + d_k
    \end{equation*}
    with $\|d_k\| \leq \bar{d}$. Under the action of the optimal controller, the system is exponentially ISS to the origin.
\end{corollary}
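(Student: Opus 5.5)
The plan is to use the optimal value function $J^*$ from Theorem \ref{thm:dt_exp_stability} (or Theorem \ref{thm:dt_practical_stab}) as an E-ISS Lyapunov function for the perturbed closed loop $x_{k+1} = F^*(x_k) + d_k$, and then invoke the result of \cite{geiselhart2016relaxed}. The quadratic sandwich bounds are already available: on $\Omega_c$, \eqref{eq:dt_J_bounds} (resp. \eqref{eq:dt_practical_J_bounds}) gives $k_1\|x\|^2 \leq J^*(x) \leq k_2\|x\|^2$ with $k_1 = \beta c_1$ and $k_2 = \beta c_2/(1-\delta(1-\lambda))$, or the analogous $\zeta$-dependent constants. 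It remains to establish the dissipation inequality
\begin{equation*}
J^*(F^*(x)+d) \leq k_3 J^*(x) + \sigma(\|d\|),
\end{equation*}
with $k_3 \in (0,1)$ and $\sigma \in \mathcal{K}$.

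To get this, I will split
\begin{equation*}
J^*(F^*(x)+d) = J^*(F^*(x)) + \big[J^*(F^*(x)+d) - J^*(F^*(x))\big].
\end{equation*}
The first term is at most $(1-\lambda)J^*(x)$ by \eqref{eq:dt_J_dec}, or $q_{\bar c}J^*(x)$ by \eqref{eq:dt_practical_J_dec}. For the second term I will use continuity of $J^*$ from Assumption \ref{assump:dt_optimal_reg}. On the compact set $S_d$ (enlarged slightly by a ball of radius $\bar d$ that stays inside $\mathcal{X}$), $J^*$ is uniformly continuous and therefore admits a modulus of continuity $\omega \in \mathcal{K}$. This gives $|J^*(y+d)-J^*(y)| \leq \omega(\|d\|)$, so we may take $\sigma = \omega$ and $k_3 = 1-\lambda$ (or $q_{\bar c}$).

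A sharper alternative avoids needing a modulus of continuity. Since $J^*(y) \leq k_2\|y\|^2$ and $J^*(y+d) \leq k_2\|y+d\|^2 \leq k_2(1+\epsilon)\|y\|^2 + k_2(1+1/\epsilon)\|d\|^2$, the quadratic upper bound alone does not bound $J^*(F^*(x)+d)$ in terms of $J^*(F^*(x))$. I would therefore keep the uniform-continuity route. If the authors intend $J^*$ to be Lipschitz on the compact set, then $\sigma(s) = L_J s$ falls out directly.

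The main obstacle is keeping trajectories inside the region where the bounds hold. The inequalities in the theorems are only valid on $\Omega_c \subset S_d$, and the disturbance can push the state out of $S_d$. I will handle this by requiring $\bar d$ small enough that $(1-\lambda)d + \omega(\bar d) \leq d$. This makes $S_d$ forward invariant for the perturbed system, because $J^*(x_{k+1}) \leq (1-\lambda)J^*(x_k) + \omega(\bar d) \leq d$. Iterating the dissipation inequality then gives
\begin{equation*}
J^*(x_k) \leq (1-\lambda)^k J^*(x_0) + \frac{\omega(\bar d)}{\lambda},
\end{equation*}
and the sandwich bounds convert this to
\begin{equation*}
\|x_k\| \leq \sqrt{k_2/k_1}\,(1-\lambda)^{k/2}\|x_0\| + \sqrt{\omega(\|d\|_\infty)/(\lambda k_1)},
\end{equation*}
which is the E-ISS estimate (local, for $x_0\in\Omega_c$ and small $\bar d$). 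The same argument covers bounded suboptimality, since an $\epsilon$-suboptimal policy enters the Bellman inequality as an additive term of the same form.
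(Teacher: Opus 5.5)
Your proposal is correct and follows essentially the same route as the paper: bound $J^*(F^*(x)+d) \le J^*(F^*(x)) + \sigma(\|d\|)$ using continuity of $J^*$ on a compact set, then apply the nominal decrease condition from Theorem~\ref{thm:dt_exp_stability} or Theorem~\ref{thm:dt_practical_stab} to conclude that $J^*$ is an E-ISS Lyapunov function. Your extra steps add rigor the paper's proof lacks: you take the modulus from uniform continuity on a compact enlargement of $S_d$ rather than as a supremum of pointwise moduli over $\Omega_c$, and you shrink $\bar d$ so that $S_d$ is forward invariant for the perturbed system, whereas the paper never checks that $F^*(x)$ and $F^*(x)+d$ stay in the set where its bounds hold.
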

\begin{proof}
    From Theorem \ref{thm:dt_exp_stability} and \ref{thm:dt_practical_stab}, we have that $J^*(x)$ satisfies quadratic bounds as well as the difference condition:
    \begin{equation} \label{eqn:disc_opt_diff}
        J^*(F(x, \pi^*(x))) \leq c_3 J^*(x)
    \end{equation}
    where we use $c_3$ to denote the coefficient in either \eqref{eq:dt_J_dec} or \eqref{eq:dt_practical_J_dec}.
    By continuity of $J^*$, for any $x$ there must exist $\sigma_x \in \mathcal{K}$:
    \begin{equation}\label{eq:val_cont}
        |J(x + d) - J(x)| \leq \sigma_x(\|d\|)
    \end{equation}
    We uniformly bound over the compact sublevel set $\Omega_c$:
    \begin{equation*}
        \sigma(s) = \sup_{x \in \Omega_c} \sigma_x(s)
    \end{equation*}
    where $\sigma \in \mathcal{K}$. Applying to \eqref{eq:val_cont} and rearranging:
    \begin{equation*}
        J(x + d) \leq J(x) + \sigma(d)
    \end{equation*}
    Finally, examine effect of the disturbance on \eqref{eqn:disc_opt_diff}:
    \begin{align*}
        J^*(F(x, \pi^*(x)) + d) &\leq J(F(x, \pi^*(x))) + \sigma(d) \\
        &\leq c_3J(x) + \sigma(d).
    \end{align*}
    $J^*$ is an E-ISS Lyapunov function.
\end{proof}

\begin{corollary} \label{crl:robust}
    Consider a controller $\Tilde{\pi}(x)$ which has been learned to approximate $\pi^*(x)$ to bounded suboptimality, i.e.
    \begin{equation*}
        \|\pi^*(x) - \Tilde{\pi}(x)\| \leq \bar{d}
    \end{equation*}
    Assume that the closed loop dynamics under the learned controller, $F(x_k, \Tilde{\pi}(x_k))$ are locally Lipschitz continuous.
    Under the action of $\Tilde{\pi}$, the system is exponentially ISS to the origin.
\end{corollary}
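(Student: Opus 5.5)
The idea is to reduce Corollary~\ref{crl:robust} to the preceding corollary on additive disturbances. The input mismatch $\Tilde{\pi}(x)-\pi^*(x)$ will be rewritten as an additive state perturbation of the optimal closed loop, whose size is controlled by $\bar{d}$.

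\textbf{Step 1: rewrite the learned closed loop.} Write
\begin{equation*}
    x_{k+1} = F(x_k, \Tilde{\pi}(x_k)) = F^*(x_k) + w_k, \qquad w_k := F(x_k, \Tilde{\pi}(x_k)) - F(x_k, \pi^*(x_k)).
\end{equation*}
This is exactly the perturbed system of the previous corollary, with disturbance $w_k$ in place of $d_k$.

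\textbf{Step 2: bound $w_k$ by a class-$\mathcal{K}$ function of $\bar{d}$.} Both arguments of $F$ lie in the compact set $\Omega_c \times \mathcal{U}$. I will use local Lipschitz continuity of $F$ in $u$ (or uniform continuity of $F$ on this compact set) to get $\|w_k\| \leq L_u \|\Tilde{\pi}(x_k) - \pi^*(x_k)\| \leq L_u \bar{d}$, or more generally $\|w_k\| \leq \omega(\bar{d})$ for a modulus of continuity $\omega \in \mathcal{K}$. The hypothesis that the learned closed loop is locally Lipschitz is what I would lean on here. I would state explicitly that it is used to make the map from the input error to the state error Lipschitz on the compact region.

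\textbf{Step 3: apply the E-ISS Lyapunov argument.} Following the previous corollary, take $J^*$ as the candidate. Continuity of $J^*$ on the compact set gives $J^*(y+w) \leq J^*(y) + \sigma(\|w\|)$. Combining this with $J^*(F^*(x)) \leq c_3 J^*(x)$ yields
\begin{equation*}
    J^*(x_{k+1}) \leq c_3 J^*(x_k) + \sigma(\omega(\bar{d})).
\end{equation*}
The composition $\sigma\circ\omega$ is class $\mathcal{K}$, and the quadratic bounds from Theorem~\ref{thm:dt_exp_stability} and Theorem~\ref{thm:dt_practical_stab} still hold. So $J^*$ is an E-ISS Lyapunov function, and E-ISS follows from \cite{geiselhart2016relaxed}.

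\textbf{Main obstacle: forward invariance.} All of the bounds above are only local: they hold on $\Omega_c$ or $S_d$. I must therefore show that trajectories of the learned closed loop stay in a region where they apply. I would handle this by restricting $\bar{d}$ to be small enough that
\begin{equation*}
    c_3 d + \sigma(\omega(\bar{d})) \leq d.
\end{equation*}
This keeps $S_d$ forward invariant, and the E-ISS claim is then stated for initial conditions in a slightly smaller sublevel set, with $\bar{d}$ sufficiently small.
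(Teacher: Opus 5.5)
Your proposal follows the paper's proof: rewrite the learned closed loop as $x_{k+1}=F^*(x_k)+d_k$ with $d_k=F(x_k,\tilde{\pi}(x_k))-F(x_k,\pi^*(x_k))$, bound $\|d_k\|$ uniformly in terms of $\bar{d}$, and invoke the additive-disturbance corollary with $J^*$ as the E-ISS Lyapunov function. Your two additions are both correct and fill in what the paper's short argument leaves implicit: bounding $d_k$ via uniform continuity of $F$ on the compact set $S_d\times\mathcal{U}$ (which, unlike Lipschitz continuity of $x\mapsto F(x,\tilde{\pi}(x))$, actually controls variation in $u$), and shrinking $\bar{d}$ so that $c_3 d+\sigma(\omega(\bar{d}))\le d$, which keeps $S_d$ forward invariant.
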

\begin{proof}
    Restructuring the error in the controller:
    \begin{align*}
    x_{k+1} &= F(x_k,\tilde{\pi}(x_k)) \\
            &= F(x_k,\pi^*(x_k)) + F(x_k,\tilde{\pi}(x_k)) \nonumber \\
            &\phantom{{}={}} \phantom{F(x_k,\pi^*(x_k))} {}- F(x_k,\pi^*(x_k)) \\
            &= F(x_k,\pi^*(x_k)) + d_k
    \end{align*}
    Since the closed loop dynamics are Lipschitz continuous, we have 
    \begin{align*}
        \|d_k\| &= \|F(x_k, \Tilde{\pi}(x_k)) - F(x_k, \pi^*(x_k))\| \\
        &\leq L_F \|\pi^*(x) - \Tilde{\pi}(x)\| \leq L_F \bar{d}
    \end{align*}
    This system can be modeled as a bounded additive disturbance, so E-ISS can be concluded.
\end{proof}
\subsection{RL Results}
We can now take these concepts and apply them to a real RL problem. In the following examples we train a policy using IsaacLab \cite{nvidia_isaac_2025} for the environment and PPO \cite{rudin_learning_2022} for the training algorithm. We deploy standard RL training tools with the CLF-RL rewards analyzed by this paper.

To illustrate the practical bounds derived above, we start with a double integrator model attempting to stabilize to the origin. Fig. \ref{fig:rl_double_int} shows the practical state bound plotted against the closed loop RL policy. We can see that the practical bounds are verified experimentally and the double integrator is exponentially stable.
\begin{figure}
\vspace{2mm}
    \centering
    \includegraphics[width=1.0\linewidth]{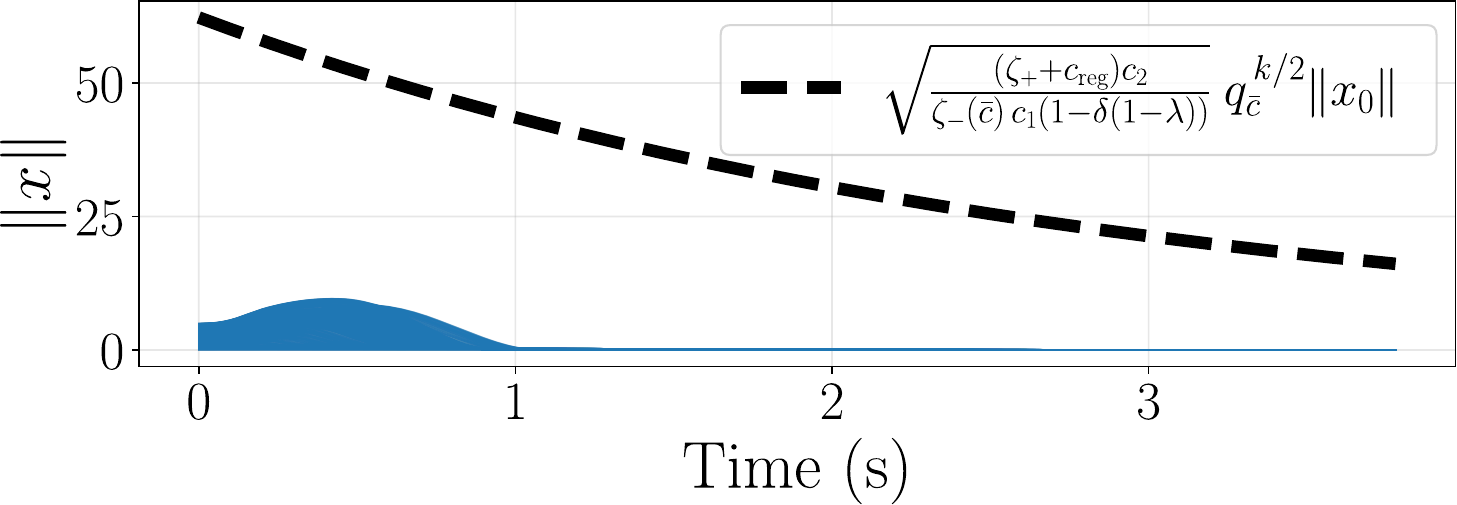}
    \caption{Results of the double integrator control policy trained with RL. The plot shows the theoretical bound from \eqref{eq:dt_practical_state_bounds} over 2000 sampled initial conditions for the CLF-RL policy. We can see that the theoretical bounds are satisfied even for the real RL problem.
    }
    \label{fig:rl_double_int}
    \vspace{-5mm}
\end{figure}

Finally, we apply the CLF-RL rewards to a full order humanoid system to learn a policy that can achieve stable walking. A Unitree G1 robot learns to track an optimized periodic gait (see \cite{li_clf-rl_2026} for more details on the training routine, although note that in this case holonomic rewards are not used). Fig. \ref{fig:rl_humanoid} shows the tracking of the resulting policy. We can see the state exponentially approach the orbit from a standing configuration. The bottom plot shows the state converge towards the orbit and the value function converge towards its maximum. 

\begin{figure}
    \vspace{2mm}
    \centering
    \includegraphics[width=1.0\linewidth]{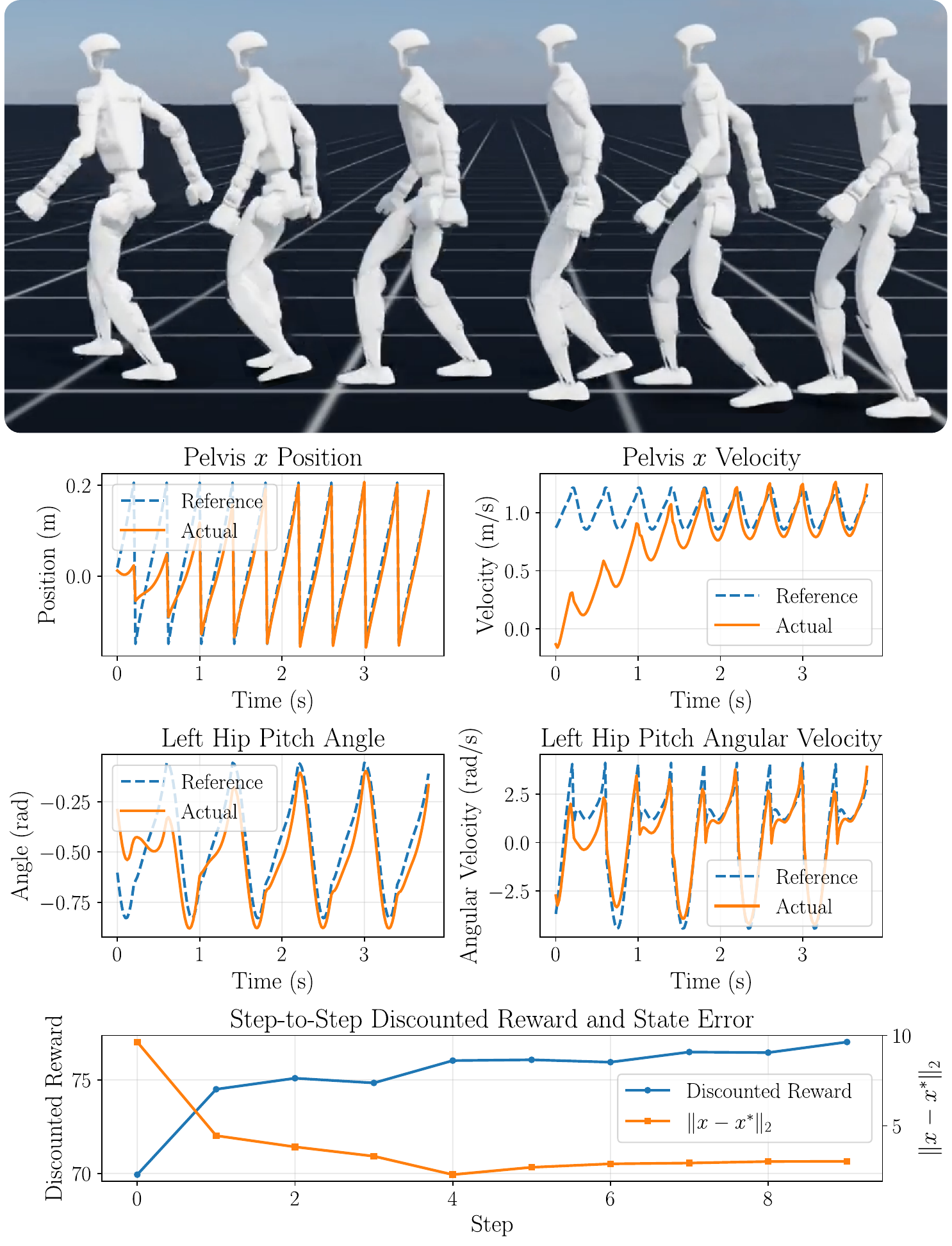}
    \caption{CLF-RL applied to a humanoid robot. Rather than tracking a single point, the CLF tracks a reference trajectory which forms a periodic orbit in the state space. We plot the tracking of the position and velocity of the pelvis and the left hip pitch joint. We also plot the discounted reward and normed error on foot impact to show the discrete step-to-step system's convergence. The robot starts from standing still and converges into the walking orbit.}
    \label{fig:rl_humanoid}
    \vspace{-6mm}
\end{figure}

\section{Conclusion and Future Work}
In this contribution we proved exponential stability of optimal controllers that used the CLF-RL rewards. We viewed the RL problem as an optimal control problem and examined the reward formulations in both discrete and continuous time. In each case the theoretical bound is confirmed by numerical examples. This grounds a practically successful method in theory, strengthening the decision to use these rewards. By analyzing this reward shaping methodology we help to reduce the number of heuristics and arbitrary choices made during the RL engineering process. Further, the CLF-RL method was applied to control a humanoid robot, resulting in stable periodic walking.

Although this work takes the first step in theoretically verifying the stability of the CLF-RL rewards, there are a number of next steps to take. The optimization in RL is done with stochastic control actions, i.e. as in PPO, and the mean action is deployed. This could be analyzed in the above context for stability guarantees. Further, the policy class searched over in RL (neural network representation, action parameterization, and control frequency) was not considered in this work. Adding these concepts would help bridge the gap between the theory and the application. For the humanoid specifically, the hybrid nature of the system should be considered and stability to a point would be replaced by stability to an orbit (i.e. steady state walking). These ideas can build on the ground work presented here.

\bibliographystyle{IEEEtran}
\bibliography{CLF-RL_theory}


\end{document}